\newlist{Steps}{enumerate}{1}
\setlist[Steps]{label=Step~\arabic*:}
\newcommand{\R}{\mathbb{R}}
\newcommand{\N}{\mathbb{N}}
\newcommand{\Z}{\mathbb{Z}}
\newtheorem{theorem}{Theorem}[section]
\newtheorem{lemma}[theorem]{Lemma}
\newtheorem{remark}[theorem]{Remark}
\numberwithin{equation}{section}
\def\ps@pprintTitle{%
 \let\@oddhead\@empty
 \let\@evenhead\@empty
 \def\@oddfoot{}%
 \let\@evenfoot\@oddfoot}
\begin{document}
\begin{frontmatter}
\date{\today}
\author[umn]{P.~Cazeaux}\ead{pcazeaux@umn.edu}
\author[epfl]{J.S.~Hesthaven}
\ead{jan.hesthaven@epfl.ch}
\address[umn]{School of Mathematics, University of Minnesota,\\ Minneapolis, MN 55455, USA}
\address[epfl]{Ecole F\'ed\'erale Polytechnique de Lausanne,\\ CH-1015 Lausanne, Switzerland}

\title{Projective multiscale time-integration for electrostatic particle-in-cell methods.}

\begin{abstract}
The simulation of problems in kinetic plasma physics are often challenging due to strongly coupled phenomena across multiple scales. In this work, we propose a wavelet-based coarse-grained numerical scheme, based on the framework of Equation-Free Projective Integration, for a kinetic plasma system modeled by the Vlasov-Poisson equations. A kinetic particle-in-cell (PIC) code is used to simulate the meso scale dynamics for short time intervals. This allows the extrapolation over long time-steps of the behavior of a coarse wavelet-based discretization of the system. To validate the approach and the underlying concepts, we perform two 1D1V numerical experiments: nonlinear propagation and steepening of an ion wave, and the expansion of a plasma slab in vacuum. The direct comparisons to resolved PIC simulations show good agreement. We show that the speedup of the projective integration scheme over the full particle scheme scales linearly with the system size, demonstrating efficiency while taking into account fully kinetic, non-Maxwellian effects. This suggests that the approach is potentially interesting for kinetic plasma problems with a large separation of scales, especially in higher dimensions.
\end{abstract}
\begin{keyword}
Plasma physics \sep Coarse-grained time integration \sep Multiscale numerical methods \sep Equation Free Projective Integration\sep  Particle In Cell methods
\end{keyword}
\end{frontmatter}

\section{Introduction}

The importance and complexity of plasma physics makes its simulation a crucial challenge across various domains of science, from astrophysics (stars, solar wind) to efforts in developing nuclear fusion reactors (tokamaks) to everyday human environment (lighting, industrial processes). 
Plasma phenomena are typically characterized by a complex multiscale character, as physics at scales separated by multiple orders of magnitude in space and time directly influences the global behavior of the plasma. 
Examples of such phenomenons include magnetic reconnection in the solar wind, in which large amounts of magnetic energy are released into the plasma by the breaking and reforming of magnetic field lines at extremely small length scales; or meso-scale plasma turbulence, which is known to have a significant influence on the global transport properties of plasmas in laboratory tokamaks. 
The direct modeling of such problems still defies numerical methods developed by computational scientists.

In this paper we study the application of an original multiscale computational technique, the Equation-Free Projective Integration method (EFPI)~\cite{Gear_2002, Kevrekidis_2003, Gear_2003, Kavousanakis_2007}, and its application to the simulation of plasmas.
A commonly used approach to simulate strongly multiscale systems is to derive and discretize a set of reduced equations for macro-scale phenomenons by analytically averaging the contributions of smaller scales, e.g., using homogenization theory.
This procedure typically relies on strong structural assumptions such as periodicity~\cite{Papanicolau_1978}, or stationary randomness.
In the context of plasmas, reduced fluid models such as magneto-hydrodynamics can sometimes be used depending on the physical setting. The EFPI framework, on the other hand, does not rely on discretizing a closed set of reduced equations.
Rather, full-scale and highly resolved simulations of the system are computed for short period of times, allowing the
recovery of the slow macroscopic behavior which is then projected forward in time with large timesteps. This approach seeks to
account for the effect of meso-scale physics on the macroscopic behavior even when no satisfying upscaling theory is known, assuming only that there is a sufficient separation of scales.
Whether this approach can provide accurate, efficient simulations for kinetic plasma problems remains an open question and a question that we shall consider in this work.

In this paper we are particularly concerned with the simulation of collisionless plasmas for which a kinetic description is required.
In this context, particle methods, in particular Particle-In-Cell methods~\cite{Birdsall_Langdon_1985} have traditionally been
preferred to Eulerian (grid-based) methods as they allow a coarse, but reasonably precise description of phase-space. 
However, standard explicit PIC schemes impose stability conditions to guarantee that space and time steps be sufficiently small to resolve the Debye length and plasma periods which are the typical space and time scales at which the electrostatic force tends to restore local charge neutrality in the plasma. For the majority of applications, these are extremely small when compared 
to the typical scales of the plasma under investigation.
The development of implicit PIC methods, which are free of such constraints, has been the focus point of many recent efforts. 
To avoid the expensive fully implicit resolution of the particle positions and the fields, alternative algorithms have been developed, e.g., the direct implicit method~\cite{Denavit_1981, Cohen_1982, Langdon_1983}
where a two-step predictor-corrector approximation is implemented or semi-implicit methods~\cite{Ricci_2002, Lapenta_2006, Markidis_2010}.
In the implicit moment methods~\cite{Mason_1981, Mason_1983}, moment equations are used to predict the value of the fields.
Alternatives include the hybrid method, see e.g.~\cite{Lipatov_2002}, where a particle model is used for some of the particle species and a fluid model for others; or the so-called \textit{multiscale} PIC methods~\cite{Friedman_1991,Parker_1993} which use different step sizes for different spatial regions, depending on the electric gradient scale in each region.
Recently, fully implicit PIC methods based on Jacobian-free iterative algorithms have been developed~\cite{Chen2011,Markidis2011,Chen2014,Chen2015,Chacon2016} and allow to nonlinear converge particle and fields, resulting in an overall computational speedup.
Another recent development is the asymptotic-preserving methods~\cite{Degond_2006, Degond_2010} that allow to better account for the stiffness due to the quasineutrality constraint.

We shall pursue the development of a different method, utilizing the equation-free projective integration method with an explicit PIC code as the fine-scale solver. The basic idea is to advance a macroscopic, or coarse, representation of the solution by initializing a consistent mesoscopic, or fine-scale, state of the system (lifting operation). This fine description is then advanced for a number of time-steps of a fine-scale solver, here the PIC code, recording the evolution of the system at the coarse level by means of a restriction operator. A linear least squares fit is then computed for each coarse variable to evaluate its temporal derivative. The coarse representation is then projected forward in time using a much larger time step. 

 A first attempt at implementing an equation free procedure in this context was the EFREE procedure developed by Shay {\it et al.}~\cite{Shay_2007} and applied to the nonlinear propagation and steepening of an ion-acoustic wave. In EFREE, both electron and ion distribution functions are assumed to be Maxwellian and the electrons are adiabatic. The coarse description is then simply the ion density, mean velocity and pressure discretized on a coarse grid. Despite neglecting all kinetic effects, the EFREE results are promising as the wave propagation is correctly reproduced during the initial phase. However as the wave steepens, kinetic aspects such as trapping and non-Maxwellian phase space structures appear and the assumptions of the EFREE procedure fails to adequately account for these. An effort to include such effects was proposed in~\cite{Maluckov_2008}, by using a coarse discretization of the full inverse cumulative ion distribution function. However, this approach did not show a strong linear scaling for the speedup with the ion acoustic wavelength as the EFREE procedure achieved. Achieving such a scaling appears necessary to establish the usability of the equation free projection integration paradigm in kinetic plasma simulations.
 
In this paper we propose a new method based on representing the ion distribution function by the use of a multiresolution wavelet analysis. 
More precisely, the distribution function is discretized using a coarse grid in physical space and a wavelet basis in velocity space. 
This allows to fully account for kinetic effects by tracking the evolution of macro-scale phase-space structures. 
The motivation for introducing a multiresolution wavelet analysis, widely used in signal processing, is its multiscale nature which allows the local adaption driven by the smoothness of the function to be approximated~\cite{Mallat_1999}. 
One of their most successful applications has been signal denoising~\cite{Donoho_1996}. Wavelet-based density estimation (WBDE) has recently been applied to the estimation of particle distribution functions obtained from plasma simulations with PIC methods~\cite{Nguyen_2010}.

The noise reduction is typically obtained by thresholding the expansion coefficients in a wavelet basis, based on the fact that this expansion is assumed to be sparse, or more precisely, the signal is assumed to be well-approximated by a small number of large expansion coefficients. Note that the sparsity of the representation is naturally connected to the coarse-graining of the data in phase-space.
In this first step of the wavelet-based projective integration method, we limit attention to a non-adaptive approach, i.e., the wavelet representation is truncated at some pre-defined level of resolution. In the future, an adaptative denoising algorithm could be used to improve on this aspect, allowing for data-driven improvement of the resolution in specific regions of the simulation grid~\cite{Luisier_2010}.

The two fundamental properties, sparsity and denoising, appear particularly well suited for application in the equation-free projective integration framework. 
First, depending on the bin size used for estimation of the ion distribution function, important random fluctuations can appear in the measured coefficients, especially in phase space regions with low particle count. The smoothing operation from fine to coarse grid is typically not sufficient to correct for the strong noise in the simulations~\cite{Shay_2007, Maluckov_2008}. On the other hand, the wavelet representation enables the computation of a much smoother approximation as we will show.

In addition, an important concern is the handling of PIC-generated statistical noise by the projective integration procedure. 
We show that by combining the restriction on a coarse grid in physical space, wavelet thresholding techniques in velocity space, quiet starts and projection along approximated characteristics, our proposed method reduces noise dramatically. 

We test numerically our proposed method on two well-known one-dimensional problems: the propagation and steepening of an ion acoustic wave, as in~\cite{Shay_2007, Maluckov_2008}; and the expansion of a Gaussian plasma in a vacuum, following~\cite{Mora2005}.
Let us note that at the moment, our method is not necessarily more efficient for either of these problems than a traditional implicit or multiscale PIC methods described above.
Rather, its potential lies in the potential of its extension to strong multiscale problems for which more conventional methods fail. 
However the method must first be developed and evaluated in simple settings, for which trusted results exist.
Our numerical experiments confirm that reference solutions obtained by brute-force high resolution PIC simulations are well reproduced by the proposed wavelet-based method. By increasing resolution, the ion pressure and phase space features converge correctly unlike previous attempts~\cite{Shay_2007}.

While we have restricted here our application to one-dimensional examples, the method is in principle not restricted to such cases. However, adaptativity in phase space seems to be an essential future development to beat the curse of dimensionality. Bypassing the use of a fine grid discretization for sampling and loading particles could also prove necessary, although this can be done in parallel for each spatial grid cell. Note that when high-frequency structure start to dominate the physics, separation of scales does not apply. This could be detected by the adaptative filtering, at which point one should revert to traditional approach, e.g. pure particle-in-cell with no upscaling.

The paper is organized as follows. In Section~\ref{sec:presentation}, we present the two-fluid Vlasov-Poisson model and the classical PIC scheme. In Section~\ref{sec:EFPI}, we recall the equation-free projective integration (EFPI) framework and the original EFREE method developed in~\cite{Shay_2007}. From this approach, we develop our proposed wavelet-based method. Section~\ref{sec:numresults} is devoted to the presentation and discussion of the numerical results and a comparison between the brute-force explicit PIC and our proposed schemes is provided. These results show that the scheme is stable and able to deal with multiscale kinetic problems where the stiffness comes from the fast electron scale.

\section{Presentation of the problem}\label{sec:presentation}
\subsection{The Vlasov-Poisson equations}

The theoretical modeling of a two-species plasma, comprising electrons and one ion species, usually begins with the two-fluid Vlasov--Poisson system. Each species of ions and electrons is described by a distribution function, respectively $f^\mathrm{i}(x,v,t)$ and $f^\mathrm{e}(x,v,t)$. The superscripts $\mathrm{i}$ and $\mathrm{e}$ will refer to the ions and electrons, respectively, throughout the whole paper. The position and velocity variables $x$, $v$ are such that $(x,v) \in \Omega \times \R^d$ with $\Omega \subset \R^d$, $d \leq 3$, and $t\geq 0$ is the time variable. We restrict our attention to the one dimensional case, with $d=1$, $\Omega = [0,L]$ where $L > 0$ is the size of the system, and periodic boundary conditions are imposed. However, the majority of the developments proposed here can be generalized to the full six dimensional Vlasov--Poisson problem.

The Vlasov-Poisson equations is expressed as
\begin{equation}\label{eq:vlasov}
\left \{ 
\begin{aligned}
\partial_t f^\mathrm{i} + v \cdot \nabla_x f^\mathrm{i} - \frac{e}{m_\mathrm{i}} (\nabla_x \phi) \cdot \nabla_v f^\mathrm{i} = 0, \\
\partial_t f^\mathrm{e} + v \cdot \nabla_x f^\mathrm{e} + \frac{e}{m_\mathrm{e}} (\nabla_x \phi) \cdot \nabla_v f^\mathrm{e} = 0, \\
\end{aligned}
\right.
\end{equation}
where we denote $e>0$ as the positive elementary charge, by $m_\mathrm{i}$ and $m_\mathrm{e}$, respectively, the ion and electron masses, and by $\phi$ the electric potential. The electrostatic potential $\phi$ is recovered by 
\begin{equation}\label{eq:poisson1}
-\Delta \phi = \frac{e}{\varepsilon_0} (n^\mathrm{i} - n^\mathrm{e})
\end{equation}
where $\varepsilon_0$ is the vacuum permittivity, and $n^\mathrm{i}$, $n^\mathrm{e}$ are the ion and electron densities, respectively, given as
\begin{equation*}
n^\mathrm{i}(x,t) = \int_{\R^d} f^\mathrm{i}(x,v,t) \mathrm{d}v, \qquad n^\mathrm{e}(x,t) = \int_{\R^d} f^\mathrm{e}(x,v,t) \mathrm{d}v.
\end{equation*}

Two very important physical scales for this model are the Debye length $\lambda$ and the electron plasma frequency $\omega_p$, given as:
\begin{equation}\label{def:debye}
\lambda = \sqrt{\frac{\varepsilon_0 k_B T^\mathrm{e}_0}{e^2 n_0}} \qquad \omega_p = \sqrt{ \frac{n_0 e^2}{\varepsilon_0 m_\mathrm{e}}},
\end{equation}
where $n_0$ is the plasma density scale, $T_0^\mathrm{e}$ is the electron temperature scale and $k_B$ is the Boltzmann constant. In problems of interest, both the Debye length and the electron plasma period become very small compared to the macroscopic scales. 

\subsection{Classical particle-in-cell scheme} \label{sec:PIC}
In order to numerically solve the Vlasov-Poisson equations, the explicit particle-in-cell (PIC) method, discussed at length in~\cite{Birdsall_Langdon_1985}, is widely used due to its strong parallel scaling and relatively low computational cost compared to traditional Eulerian methods. This is particularly true for problems in high dimensions.
The basic idea of the PIC method is to discretize the phase-space distribution function $f^{i,e}$ with weighted macro-particles:
\begin{equation}\label{def:empiricaldistribution}
f^{i,e}(t,x,v) \approx f_{N_p}^{i,e} = \sum_{j=1}^{N_p} \omega^{i,e}_j \delta(v-V^{i,e}_j) \psi(h^{-1}(x-X^{i,e}_j)),
\end{equation} 
where $N_p$ is the number of particles, $\psi$ is a shape function, $X^{i,e}_j(t)$, $V^{i,e}_j(t)$ is the location of the $j$-th particle in phase space and $\omega^{i,e}_j$ is a weight which is defined at initialization. One should not consider each particle as a physical particle but rather as particle clouds. Each of these particles follows Newton's equations:
\begin{equation}
\begin{aligned}
\dot{X^\mathrm{i}_j}(t) &= V^\mathrm{i}_j(t), &\dot{V^\mathrm{i}_j}(t) &= - \frac{e}{m_\mathrm{i}} \nabla_x \phi_{{N_p},h} \left ( X^\mathrm{i}_j(t), t \right ), \\
\dot{X^\mathrm{e}_j}(t) &= V^\mathrm{i}_j(t), &\dot{V^\mathrm{e}_j}(t) &= \frac{e}{m_\mathrm{e}} \nabla_x \phi_{{N_p},h} \left ( X^\mathrm{e}_j(t), t \right ).
\end{aligned} 
\end{equation}
These equations are then discretized in time. The potential $\phi_{{N_p},h}$, approximating the electrostatic potential $\phi$, is determined at the particle locations by the following procedure, repeated at each time step. Given a fixed spatial grid of space step $h$, the nodal values of the density of each species is computed by projecting every particle to the grid nodal locations with appropriate weights, given by the shape of the macroparticle~\cite{Birdsall_Langdon_1985}. The total charge density is then used to determine the nodal values of the electrostatic potential on the mesh by solving the Poisson equation~\eqref{eq:poisson1} numerically, using either finite differences or the fast Fourier transform. Field values at each particle location are then determined by interpolation.

\subsection{Advantages and limitations}
Due to their efficiency and simplicity, PIC-based numerical methods are a dominating tool used in the modeling of kinetic plasma physics phenomena. 
 While the mathematical literature on PIC methods is limited, some a priori convergence results have been obtained, see e.g.~\cite{Cottet_1984, Ganguly_1989, Wollman_1996}. 
Typically, the error scales with the number of particles, as~$1/\sqrt{{N_p}}$ as would be expected. 
The independence of this rate with respect to dimension is a great advantage of the method, since Eulerian grid-based methods become extremely costly in dimension 4 or higher. 
Nevertheless, it is well-known that to ensure stability, explicit PIC codes must resolve the local Debye length and the electron plasma period everywhere in the computational domain~\cite{Birdsall_Langdon_1985}.

The main drawback of PIC methods is the well-known statistical sampling noise due to the low number of numerical particles in each cell~\cite{Nevins_2005} - a number that by design is orders of magnitudes smaller than the true number of particles in the plasma. This so-called \textit{shot noise} has a significant impact on the use of PIC solvers as a fine-scale solver in the projective integration framework~\cite{Shay_2007, Maluckov_2008}.
Various noise-reduction techniques have been developed to reduce its impact on computations. We will in particular make use of the classical quiet start procedure~\citep{Birdsall_Langdon_1985,Wollman_1996}, which helps to ensure that the initial noise is small. Instead of sampling randomly, similar to that of a Monte Carlo method, or on a regular grid in phase space as studied in~\cite{Cottet_1984}, a given number of particles are initialized in each cell: the (conditional) inverse cumulative distribution function of the distribution is used to assign the particle velocities, and the particle locations inside the cell are scrambled using a pseudo-random sequence to reduce correlations.

\section{Projective integration scheme}\label{sec:EFPI}
In this section, we present the equation-free projective integration (EPFI) framework and we develop a new method for the solution of plasma problems. The major issues in the design of such a multiscale scheme, based on the PIC method, are the choice of a set of appropriate macroscopic variables, and the handling of the PIC-generated statistical noise.

\subsection{Presentation of EFPI schemes}
The simulation of strongly coupled multiscale systems remains a major challenge due to the coupling of physics,
often known only at the meso-scale, and the dynamics of interest which may emerge at the macroscale. 
Often, direct computation of the solution of the original problem is out of reach due to the necessity for the discretization to resolve the smallest or fastest scales in the system. 
Traditional approaches for the simulation of systems that exhibit a significant separation of scales usually involve two stages. 
First, a set of reduced equations is derived to describe the system at the coarse or macroscopic scale. 
These reduced equations are solved numerically and their solutions analyzed. 
For plasmas, such equations exist in the form of fluid closures which reduce the dimensionality of the problem, such as the equations of magneto-hydrodynamics. However, these rely on assumptions which are not always satisfied and leave out important mesoscale physics~\cite{Krall_1986}.

When such closed-form equations are not available, the equation-free approach, originally proposed by Kevrekidis {\it et al.}~\cite{Gear_2002, Kevrekidis_2003, Gear_2003, Kavousanakis_2007}, sidesteps the need for an explicit reduced model. Instead, the approach relies on short bursts of fine-scale simulations to determine the time derivatives of a coarse representation of the system. 
Even though macro-scale level equations may not be necessary, it is worth stressing that an efficient application of the equation-free procedure requires the system to be parameterizable by a set of coarse variables for which the evolution evolves on a slow manifold~\cite{Kevrekidis_2003}. 
However, if the dynamic behavior of the system can be coarse-grained by a representation that is smooth at macroscopic scales in time and space, then it is conceivable that the full system need only be simulated for a small number of fine-scale time steps to advance the coarse variables over a much larger time interval.

The success of the method depends critically on the correct choice of the coarse description of the system, as we discuss below in more detail. 
Low dimensionality is clearly important as the primary purpose is computational speedup. 
A {\it restricting} operator, allowing the transition between the fine to the coarse representation of the data, as well as a {\it lifting} operator to go from coarse to fine scale, must be determined. Of course, the use of a coarse representation implies that some fine scale information is lost through the restricting process. This data must be reconstructed when lifting to initialize the fine-scale solver, typically through interpolation techniques and physical knowledge of the system.

 \begin{remark}
 Ultimately, a goal should be to develop error estimators to validate the fidelity of the coarse representation during the time integration. This is in particular the case in the context of plasma simulations, since one may not be able to assume that the relaxation properties on which the EFPI framework is based~\cite{Kevrekidis_2003, Gear_2003} hold at all times. For example, quasineutrality or Boltzmann equilibria can be violated as shocks develop or kinetic effects become dominant. Such techniques 
would allow the adaption of the macroscopic projection timestep as well as the coarse resolution, or revert to full-scale simulations when, for example, short transitions are detected during which no scale separation can be exploited.
\end{remark}

\begin{figure}[ht]
\centering
\includegraphics[width=.8\textwidth]{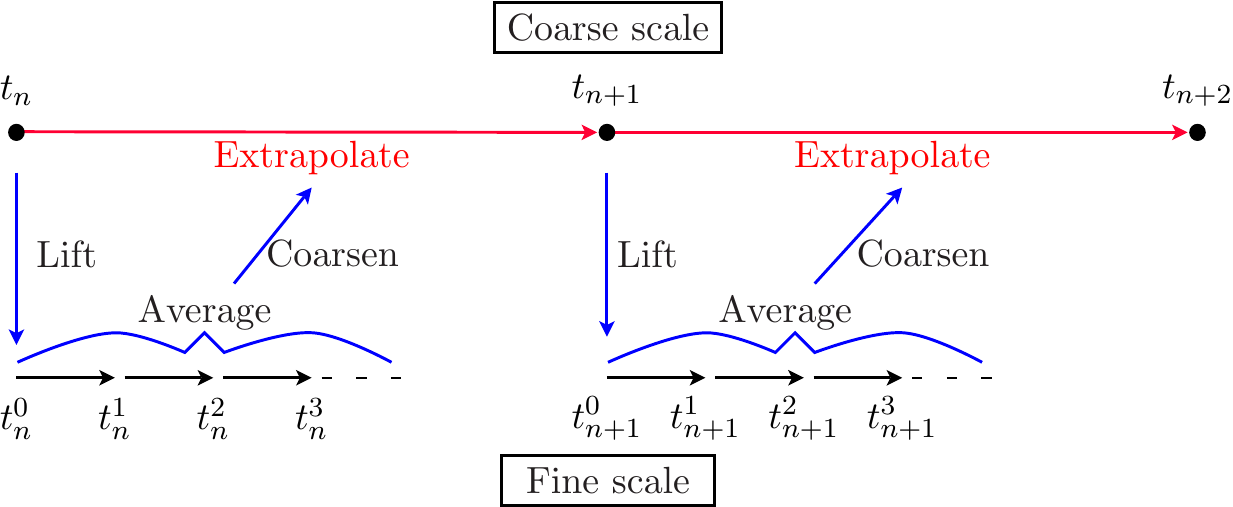}
\caption{Equation free projective integration cycle}\label{fig:efpicycle}
\end{figure}

The basic forward Euler projective integration cycle is presented in Figure~\ref{fig:efpicycle}. 
This cycle comprises three main parts:

\paragraph{Initialization} At the beginning of the step, only the coarse representation of the system is available. This data is used to recover, by the lifting operator, a consistent fine state, which is used to initialize the fine-scale time stepper.

\paragraph{Micro-scale evolution} After its initialization, the fine-scale time solver is advanced forward $N_e$ steps. After each step, the corresponding coarse variables are obtained by restricting the data.

\paragraph{Extrapolation} At the completion of the fine-scale steps, the evolution of the coarse variables is known at $N_e+1$ instants corresponding to the fine-scale time steps. 
For each of the degrees of freedom, the best fit for the rate of change of the corresponding variable $\psi$ is computed e.g. by a least squares algorithm. The macroscopic variables are
projected forward by the macroscopic timestep $\Delta t$ by the forward Euler formula
\begin{equation}\label{eq:extrapolation}
f(t+\Delta t) = f(t) + \Delta t \dot{f}(t).
\end{equation}
\begin{remark}
We note that it is possible to adapt this procedure to increase the accuracy of the time integration, i.e., we could also consider a trapezoidal leapfrog method as described in~\cite{Shay_2007}.
\end{remark}

\subsection{Choice of macroscopic variables: EFREE approach}\label{sec:efree}

The core of the method is the distinction between a coarse-scale (macroscopic) and a fine-scale (mesoscopic) model representations.
 In this study as in~\cite{Shay_2007, Maluckov_2008}, the fine-scale representation consists in a fully kinetic description of the plasma,
 resolving the smallest relevant scales of space and time and using an explicit PIC scheme. A coarse representation must then be carefully chosen to approximate the system. 
 
 We first discuss the original so-called \textit{EFREE} approach proposed in~\cite{Shay_2007}, to model ion-acoustic wave propagation and steepening. In EFREE, both ion and electron distribution functions are assumed to be shifted Maxwellian distributions, the electrons are assumed to be adiabatic, and the plasma is quasineutral. Thanks to these physical assumptions, the system is parameterized by the ion and electron densities $n^\mathrm{i,e}$, mean velocities $V^\mathrm{i,e}$ and temperatures $T^\mathrm{i,e}$ (or equivalently the pressures $P^\mathrm{i,e} = n^\mathrm{i,e} T^\mathrm{i,e}$). 

 The ion variables are chosen as \textit{active} variables, i.e. directly integrated in time. 
 Electron variables are \textit{passive} variables, determined from the ion variables. 
 Due to adiabaticity, the electron temperature is assumed to be constant at all times:
 \begin{equation}\label{eq:adiabaticity}
 T^\mathrm{e} = T^\mathrm{e}_0.
 \end{equation}
The electron density is a function of the electrostatic potential $\phi$:
 \begin{equation}\label{eq:boltzmann}
 n^\mathrm{e} = n_0 \mathrm{exp}\left ( \frac{e \phi}{k_B T^\mathrm{e}_0} \right ),
 \end{equation}
 where $n_0$ is a renormalization constant. Assuming that $n^\mathrm{i} \approx n^\mathrm{e}$ by quasineutrality, one deduces from~\eqref{eq:boltzmann} the potential $\phi$ as a function of $n^\mathrm{i}$. Using~\eqref{eq:poisson1} the electron density is then given as
 \begin{equation} \left \{
\begin{aligned}
\phi &= \frac{k_B T_0^\mathrm{e}}{e} \mathrm{log} \left ( \frac{n^\mathrm{i}}{n_0 } \right ), \\
 n^\mathrm{e} & = n^\mathrm{i} - \frac{\varepsilon_0}{e}\Delta \phi, 
 \end{aligned} \right.
 \end{equation}
 i.e.
 \begin{equation}\label{eq:explicitelectron}
 n^\mathrm{e} = n^\mathrm{i} - \lambda^2 \frac{\partial}{\partial x} \left ( \frac{n_0}{n^\mathrm{i}} \frac{\partial n^\mathrm{i}}{\partial x} \right ).
 \end{equation}
 Finally the electron mean velocity is recovered by
 \begin{equation}\label{eq:electronvelocity}
V^\mathrm{e} = V^\mathrm{i}.
 \end{equation}
The spatial discretization of the \textit{active} variables (in EFREE, $n^\mathrm{i}$, $V^\mathrm{i}$ and $P^\mathrm{i}$) uses a two-scale grid structure. 
The coarse representation of the data is defined as the values of the ion variables at nodes of a coarse grid, which resolves the macroscopic phenomenon.
The data is then lifted to the fine grid which corresponds to the mesh employed for the PIC algorithm. 
This grid typically has a much finer resolution as it must resolve the Debye length $\lambda$. 
The \textit{passive} electron variables are then determined using Eqs.~(\ref{eq:adiabaticity}--\ref{eq:electronvelocity}), and the particle positions are initialized by the PIC code.
 During the fine-scale evolution step of the EFPI procedure, the variables are projected on the fine-scale grid using the usual PIC weighting scheme. 
 
Increasing and decreasing the resolution by moving the data between the coarse and fine grids is realized in two steps:

\paragraph{Lifting operator}
Let $n_c$ be the number of nodes in the coarse grid, and $n_f$ the number of nodes in the fine grid. For simplicity, both numbers are assumed to be powers of $2$, and the interpolation procedure is done in successive factors of $2$, repeated over $l$ levels such that $n_f = 2^l n_c$. Let $f_k = f^{(0)}_k$, $k = 1, \dots, n_c$ be the values of a variable $f$ at the coarse grid points. The first level of interpolation is:
\begin{equation}\label{def:interpolation}
f_{2k}^{(1)} = f_k^{(0)}, \qquad f_{2k-1}^{(1)} = 0.5 \left ( f_{k-1}^{(0)} + f_k^{(0)} \right ), \qquad \forall k = 1, \dots, n_c,
\end{equation}
where we assume periodic boundary conditions: $f_0^{(0)} = f_{n_c}^{(0)}$. This step is repeated $l$ times to obtain the fine-scale data $f^{(l)}_k$, $k = 1, \dots, n_f$.
\paragraph{Restriction}
To restrict the data obtained by projection from the particle positions on the fine grid, a linear smoothing scheme is employed. Knowing the values $f^{(l)}_k$, $k = 1, \dots, n_f = 2^l n_c$ of the variable $f$ at the fine grid points, the first level of restriction gives the values
\begin{equation}\label{def:restriction}
f_{k}^{(l-1)} = 0.25 \left (f_{2k-1}^{(l)} + 2 f_{2k}^{(l)} + f_{2k+1}^{(l)}\right ), \qquad \forall k = 1, \dots, 2^{l-1} n_c,
\end{equation}
also assuming periodic boundary conditions $f_{n_f+1}^{(l)} = f_1^{(l)} $. This operation is repeated until the coarse level is reached.

\begin{remark}
Restriction of the data to a coarse grid is essential to realize a speedup through EFPI integration~\cite{Shay_2007}. 
Here, it also plays a crucial role by smoothing the statistical fluctuations of the variables inherent in the PIC scheme for the fine-scale kinetic representation of the data. 
As we will illustrate below, random fluctuations in the computed quantities appear to be substantially less pronounced when the
EFPI-accelerated schemes is used as compared to a traditional PIC schemes. 
However, it should be noted that levels of fluctuations in the net charge $n^\mathrm{i} - n^\mathrm{e}$, the electric field, or the potential remain too important to allow their use as active variables.
\end{remark}
We refer to~\cite{Shay_2007} for more details on this EFREE procedure. 
The numerical results presented in that work show good agreement with fully resolved PIC simulations for the propagation of an ion acoustic wave. In particular, the propagation speed of the wave is very well reproduced and the
numerical speedup obtained with the EFREE procedure is promising, particularly since a macroscopic CFL condition, related only to the ion sound speed $c_s$ and the coarse grid step $\Delta x_c$, appears as the limiting factor for the coarse time step $\Delta t$:
\begin{equation}\label{eq:CFL}
2 c_s \Delta t \lesssim \Delta x_c.
\end{equation}
This property enables a linear scaling as one increases the system size, i.e. the wavelength of the ion acoustic wave.

However, problems appear as the wave steepens, a shock forms and the ion distribution exhibits strong non-Maxwellian features due to ion trapping. In particular, large differences in the ion pressure appear relatively fast. As already noted in~\cite{Shay_2007}, it is necessary to generalize the description to handle non-Maxwellian distribution functions to address these phenomena. 

\subsection{A wavelet-based equation-free scheme}\label{sec:wefree}
To capture the $(x,v)$ phase-space structures emerging during the time evolution of the kinetic PIC experiments, we propose to represent the full ion distribution function $f^\mathrm{i}$, using a coarse-grained wavelet approximation in the velocity space. As in the EFREE method presented in Sec.~\ref{sec:efree}, we assume that the electrons are adiabatic and that their distribution function is well approximated by a shifted Maxwellian. However, we lift the requirements of quasi-neutrality and, most importantly, the assumption of Maxwellian-distributed ions.

\paragraph{Choice of macroscopic variables and parameterization}
The ion distribution is approximated using a two-scale phase space grid structure.
Let $V_{min}$, $V_{max}$ be an appropriately chosen bounds for the ion velocities. 
A fine phase space rectangular grid is then constructed using $n_{x,\mathrm{f}}$ points on the position axis and $n_{v,\mathrm{f}} = 2^{J_\mathrm{f}}$ points on the velocity axis. Let $\delta x = L/n_{x,\mathrm{f}}$ and $\delta v = (V_{max} - V_{min})/n_{v,\mathrm{f}}$ be the grid steps in position and velocity respectively. The ion distribution function is then approximated at this fine level as continuous and piecewise-linear in $x$ and piecewise-constant in $v$, i.e.
\begin{equation}\label{eq:waveletapproximation}
	f^\mathrm{i}(x,v) = \sum_{k_x = 1}^{n_{x,\mathrm{f}}} \sum_{k_v = 1}^{n_{v,\mathrm{f}}}  f^\mathrm{i,f}_{k_x, k_v} \ \chi_{k_x}\left ( \frac{x}{\delta x} - k_x \right ) \xi_{k_v}\left ( \frac{v - V_{min}}{\delta v} + \frac{1}{2} - k_v \right ),
\end{equation}
where
\begin{equation}
	\chi_{k}(x) =  \begin{cases}
		1 - \vert x \vert & \text{if }\vert x + j n_{x,\mathrm{f}} \vert \leq 1 \text{ for some } j \in \N;\\
		0 & \text{else,}
	\end{cases} 
	\qquad \text{and} \qquad
	\xi_{k}(v) =  \begin{cases}
		1 & \text{if }\vert v \vert \leq 1/2;\\
		0 & \text{else.}
	\end{cases} 
\end{equation}
The fine discrete representation is then composed of the $n_{x,\mathrm{f}} \times n_{v,\mathrm{f}}$ coefficients,
\begin{equation}\label{def:empiricaldistribution2}
	f^\mathrm{i,f} \equiv \left \{ f^\mathrm{i,f}_{k_x, k_v} \right \}_{\stackrel{1 \leq k_x \leq n_{x,\mathrm{f}}}{ 1 \leq k_v \leq n_{v,\mathrm{f}}}}.
\end{equation}
A coarse representation can then be obtained efficiently as follows:
\begin{itemize}
	\item The data is first restricted to the coarse spatial resolution with $n_{x,\mathrm{c}}$ points using the scheme~\eqref{def:restriction}, leading to an intermediate set of coefficients $f^{\mathrm{i}, (l)} \equiv \left \{ f^{\mathrm{i}, (l)}_{k_x, k_v} \right \}_{\stackrel{1 \leq k_x \leq n_{x,\mathrm{c}}}{ 1 \leq k_v \leq n_{v,\mathrm{f}}}}$ in the notation of Sec.~\ref{sec:efree}.
	\item For a given position index $1 \leq k_x \leq n_{x,\mathrm{c}}$, the corresponding column of values is assimilated to a set of scaling coefficients in a wavelet basis at the scale $J_\mathrm{f}$ (where $n_{v,\mathrm{f}} = 2^{J_\mathrm{f}}$). The (periodized) discrete wavelet transform is then applied to compute the scaling coefficients at a coarse scale $J_\mathrm{c} \leq J_\mathrm{f}$~\cite{Daubechies1993,Nguyen_2010}. All coefficients for scales strictly finer than $J_\mathrm{c}$ are then discarded, leading to retain only $n_{v,\mathrm{c}} = 2^{J_\mathrm{c}}$ coefficients for each position index $k_x$. We obtain thus a coarse representation,
	\begin{equation}
		f^\mathrm{i,c} \equiv \left \{ f^\mathrm{i,c}_{k_x, k_v} \right \}_{\stackrel{1 \leq k_x \leq n_{x,\mathrm{c}}}{ 1 \leq k_v \leq n_{v,\mathrm{c}}}}.
	\end{equation}
\end{itemize}

\noindent In this work, we employ the R-Coiflet wavelet family of order 4, chosen for its regularity and symmetry properties~\cite{Daubechies1993} in addition to having four vanishing moments. For completeness, some notions on wavelets and the wavelet transform are recalled in Appendix~\ref{sec:PrimerWavelets}.

The scaling coefficients $f^\mathrm{i,c}$ are chosen as the \textit{active} variables, leading to a wavelet-based equation-free projective integration method. 
\begin{remark}
Note that non-physical negative values of the distribution function can be obtained due to the extrapolation procedure, or due to the non-positivity of the oscillating wavelet functions as observed in~\cite{Nguyen_2010}. We propose and detail in Appendix~\ref{sec:QuietStart} a procedure designed to correct this defect and allow resampling from the approximated distribution function, while conserving mass and momentum.
\end{remark}

To complete the description of the system, we determine the electron \textit{passive} variables $n^\mathrm{e}$, $V^\mathrm{e}$ and $T^\mathrm{e}$ by first computing the ion density and velocity $n^\mathrm{i}$ and $V^\mathrm{i}$ by integrating numerically~\eqref{eq:waveletapproximation} along the velocity coordinate. Using the assumption of adiabatic electrons~\eqref{eq:boltzmann}, the electrostatic potential satisfies the nonlinear Poisson-Boltzmann equation:
\begin{equation}\label{eq:poisson2}
-\Delta \phi = \frac{e}{\varepsilon_0} \left (n^\mathrm{i} - n_0 \mathrm{exp}\left ( \frac{e \phi}{k_B T^\mathrm{e}_0} \right ) \right ).
\end{equation}
Equivalently, we solve self-consistently for the rescaled potential $\widetilde{\phi} = e \phi / (k_B T_0^\mathrm{e})$ such that
\begin{equation}\label{eq:poisson2rescaled}
- \lambda^2 \Delta \widetilde{\phi} + \mathrm{exp} ( \widetilde{\phi} ) = n^\mathrm{i} / n_0.
\end{equation}
In practice, this equation is discretized on the coarse grid and solved iteratively using a Newton method. Then, the values for the electron variables are recovered from Eqs.~\eqref{eq:adiabaticity}, ~\eqref{eq:boltzmann} and~\eqref{eq:electronvelocity}:
\begin{equation}\label{eq:electronpassive}
n^\mathrm{e} = n_0 \mathrm{exp} ( \widetilde{\phi} ), \qquad V^\mathrm{e} = V^\mathrm{i}, \qquad T^\mathrm{e} = T^\mathrm{e}_0.
\end{equation}

\paragraph{Restriction operator}
The process for restricting the PIC data to the coarse level is in principle quite simple: the histogram of the empirical ion distribution function~\eqref{def:empiricaldistribution2} is constructed by projecting each particle using the PIC weighting scheme in space and a nearest-grid-point scheme for the velocity.
However, to further reduce the error associated with the projective integration, it is known that one can take advantage of an appropriate choice of \textit{co-evolving frame} associated with the solution~\cite{Kavousanakis_2007}. 

We propose here to project the particles by transporting the phase space grid backwards along approximated characteristics from time $t_{n+1}$  to $t_n + p \delta t$, where $t_n = n \Delta t$. An equivalent description is that the solution is approximately advected forward along the characteristics up to time $t_{n+1}$. This original procedure is detailed in Appendix~\ref{sec:MovingFrame}.

In practice, we sample for a few fine time steps the histogram of the approximately advected solution $\widetilde{f}_{(n,n+1)}^\mathrm{i,f}$ at times $t_n + p \delta t$, $0 \leq p \leq N_e$ (see Appendix~\ref{sec:MovingFrame} for details). The coarse representation $\widetilde{f}_{(n,n+1)}^\mathrm{i,c}$ of these coefficients is obtained at each step by the coarsening process described above. Note that by construction,
\[
	\widetilde{f}_{(n,n+1)}^\mathrm{i,c}(t_{n+1}) = f^\mathrm{i,c}(t_{n+1})
\]
\begin{remark}
	Note that this is a different approach than used in~\cite{Shay_2007,Chen2011,Taitano2013}, where the distribution function is simply shifted by the wave velocity at the initialization stage. In our numerical experiments, this simple shifting did not yield satisfactory results due to the phase-space (as opposed to moments) representation, see also the discussion in Section~\ref{sec:DiscussionIonWave}.
\end{remark}
In the case of a strongly hyperbolic problem such as the Vlasov-Poisson problem, traveling with an approximate flow results in big improvements in terms of noise and accuracy since the derivative which is estimated is smaller in the new frame.
In addition, our approach allows to naturally take into account the \textit{transport component} of the equation.

\paragraph{Extrapolation}
After completion of the fine-scale steps by the PIC integrator, the coarse variables obtained by the restriction operator described above are projected forward using a forward Euler scheme as in Eq.~\eqref{eq:extrapolation}:
\begin{equation}\label{eq:extrapolation2}
	f^\mathrm{i,c}(t_{n+1}) = \widetilde{f}_{(n,n+1)}^\mathrm{i,c}(t_{n+1}) = \widetilde{f}_{(n,n+1)}^\mathrm{i,c}(t_n + N_e \delta t) + (\Delta t - N_e \delta t) \left ( \frac{\mathrm{d}}{\mathrm{d}t} \widetilde{f}_{(n,n+1)}^\mathrm{i,c}\right ).
\end{equation}
In this expression, the derivative is estimated by a least-squares estimator, which writes
\begin{equation}\label{eq:leastsquares}
	\frac{\mathrm{d}}{\mathrm{d}t} \widetilde{f}_{(n,n+1)}^\mathrm{i,c} = \sum_{p = 0}^{\lfloor N_e / 2 \rfloor} \frac{6 }{N_e (N_e + 1) (N_e + 2) } \frac{(N_e - 2i)}{\delta t} \left [ \widetilde{f}_{(n,n+1)}^\mathrm{i,c}(t_n + (N_e - p) \delta t) - \widetilde{f}_{(n,n+1)}^\mathrm{i,c}(t_n + p \delta t) \right ].
\end{equation}	
\begin{remark}\label{rem:chargeconservation}
	This projection scheme conserves the total charge. Indeed, the total charge is proportional to the sum of all coefficients of $f^\mathrm{i}$ or all the weights of the particles at any given step. This number is conserved by the PIC time-stepper and the restriction operator, and thus also equal to the sum of all coefficients of $\widetilde{f}_{(n,n+1)}^\mathrm{i,c}$ for $p = 0 \dots N_e$. The sum of all entries of $\frac{\mathrm{d}}{\mathrm{d}t} \widetilde{f}_{(n,n+1)}^\mathrm{i,c}$ is then zero by~\eqref{eq:leastsquares}, thus the total charge is conserved by the projection step~\eqref{eq:extrapolation2}.
\end{remark}

\begin{remark}
	The forward Euler scheme~\eqref{eq:extrapolation2} is only first order accurate. It is possible to use higher order timestepping for the projection step in EFPI, e.g. leap-frog or Runge-Kutta methods~\cite{Kevrekidis_2003,Gear_2003}.
	However in numerical experiments for the wavelet-based EFPI scheme, implementing a higher order integrator did not yield any improvements in accuracy or stability.
\end{remark}

\paragraph{Lifting and initialization of the next macro-step}
The lifting step consists of three important stages: 
\begin{itemize}
\item Increasing the phase space resolution of the data using the linear interpolation~\eqref{def:interpolation} and the inverse DWT in~\eqref{eq:waveletapproximation}; 
\item Loading the particles using the quiet start procedure.
\end{itemize}
Note that the loading scheme is particularly important as it controls the initial level of the noise. We detail this procedure in Appendix~\ref{sec:QuietStart}.

\section{Test cases}\label{sec:numresults}
We now illustrate the performance of the proposed approach for solving the Vlasov-Poisson system by modeling
two different test cases, and seek to compare the results obtained with the classical PIC scheme and with the proposed coarse time-stepping algorithm. 
Simulations with the original EFPI algorithm proposed by Shay {\it et. al.}~\cite{Shay_2007} are also considered in the first test case, comprising an ion acoustic wave, for a comparison. The second test case is a Gaussian plasma expansion in a vacuum.

\subsection{Ion acoustic wave test-case}

The first test case is the propagation and nonlinear steepening of an ion acoustic wave in a Maxwellian two-species plasma. In previous studies~\cite{Shay_2007, Maluckov_2008}, other versions of EFPI-accelerated PIC-based codes were applied to this test case and we refer in particular to~\cite{Shay_2007} for a detailed account of the theory and initial setup of the wave. 
We initialize the Vlasov-Poisson system with
\begin{equation}\label{eq:errormeasures}
f^\mathrm{i,e}(x,v, t=0) = C^\mathrm{i,e} \left (1 + \delta cos \left ( 2\pi x/L\right ) \right )\mathrm{exp}\left (- \frac{1}{2} \left ( \frac{ v - \delta cos \left ( 2\pi k \cdot x/L \right ) }{v_\mathrm{th}^{i,e}}\right )^2\right ),
\end{equation}
where $C^{i,e} $ are renormalization constants, $\delta = 0.2$ is the initial amplitude of the perturbation. The initial thermal velocities are chosen as $v_\mathrm{th}^\mathrm{i} = 0.22$ and $v_\mathrm{th}^\mathrm{e} = 42.5$, since the condition $T_0^\mathrm{e} \gg T_0^\mathrm{i}$ is necessary to minimize Landau damping of the wave.

The set of parameters normalized for the XES1-based PIC code~\cite{Birdsall_Langdon_1985} are $\epsilon_0 = 1$, $\omega_p^\mathrm{e} = 1527.35$, $\omega_p^\mathrm{i} = 36$, $q_\mathrm{e}/m_\mathrm{e} = -1800$, $q_\mathrm{i}/m_\mathrm{i} = 1$. The time step is chosen as $ \delta t = 0.0001667 \approx 0.25 / \omega_p^\mathrm{e} $. The number of mesh cells $nx = n_f$ is adjusted depending on the system length $L$ for a constant grid step $h = 1/128 \approx 0.28 \lambda$. These parameters correspond to the same case as in~\cite{Shay_2007}. Under these conditions, both the fast space and time scales (the Debye length and the electron plasma period) are resolved, and the PIC method is stable.

Parameters for the simulation runs used for this study are shown in Table~\ref{tab:iawparameters}.
\begin{table}[ht]
\centering
\begin{tabular}{|c|c|c|c|c|c|c|c|c|}
 \hline
Run & Type & $L$ & $N_\mathrm{ppc}$ & $n_{x,\mathrm{f}}$ & $n_{x,\mathrm{c}}$  & $n_{v,\mathrm{f}}$ & $n_{v,\mathrm{c}}$ & $\Delta t / \delta t$ \\ 
\hline
\hline
1 & PIC & $4$ & $8192$ & $512$ & $-$ & $-$ & $-$ & $-$\\
\hline 
2 & EFPI & $4$ & $4096$ & $512$ & $512$ & $2048$ & $64$ & $150$ \\
\hline
3 & EFPI & $4$ & $2048$ & $512$ & $128$ & $2048$ & $64$ & $250$ \\
\hline
4 & EFPI & $4$ & $1024$ & $512$ & $32$ & $2048$ & $64$ & $600$ \\
\hline
\hline
5 & PIC & $8$ & $8192$ & $1024$ & $-$ & $-$ & $-$ & $-$\\
\hline 
6 & EFPI & $8$ & $4096$ & $1024$ & $1024$ & $2048$ & $64$ & $150$ \\
\hline
7 & EFPI & $8$ & $2048$ & $1024$ & $256$ & $2048$ & $64$ & $250$ \\
\hline
8 & EFPI & $8$ & $1024$ & $1024$ & $64$ & $2048$ & $64$ & $600$ \\
\hline
\end{tabular}
\caption{Simulation runs presented for this test case. $L$ is the domain length, $N_\mathrm{ppc}$ the number of particles per species per cell for the underlying PIC code, $n_{x,\mathrm{f}}$ and $n_{x,\mathrm{c}}$ ($n_{v,\mathrm{f}}$ and $n_{v,\mathrm{c}}$) respectively the number of fine and coarse space (velocity) grid points, $\delta t$ and $\Delta t$ respectively the fine and coarse time steps.} \label{tab:iawparameters}
\end{table}

\subsubsection{Choice of EFPI numerical parameters}
\begin{figure}[ht]
\centering
\begin{minipage}{.49\textwidth}
\centering
\includegraphics[width=\textwidth]{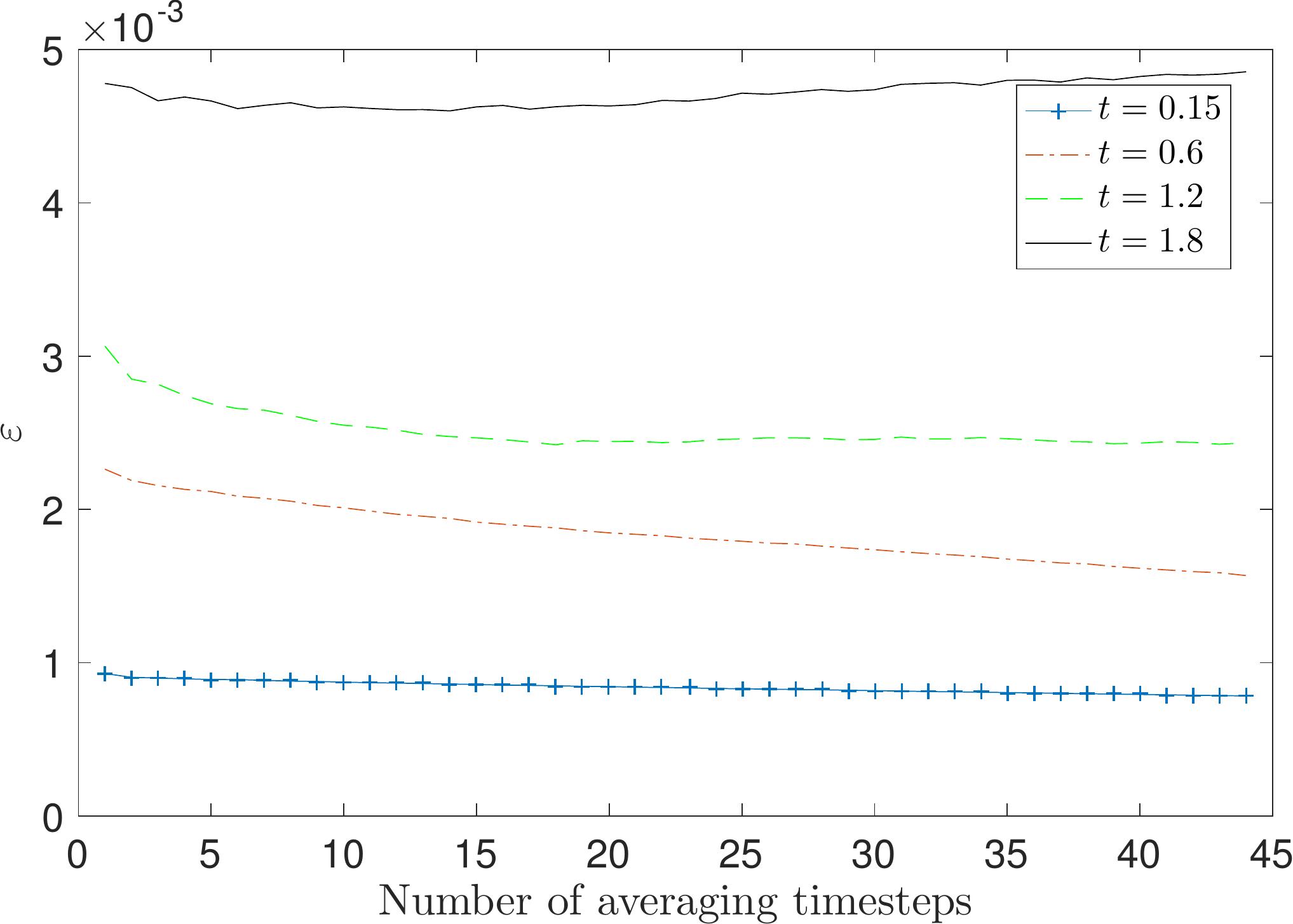}
\subcaption{Error in the density, $\varepsilon$.} \label{fig:optimtimestep:a}
\end{minipage}
\begin{minipage}{.49\textwidth}
\includegraphics[width=\textwidth]{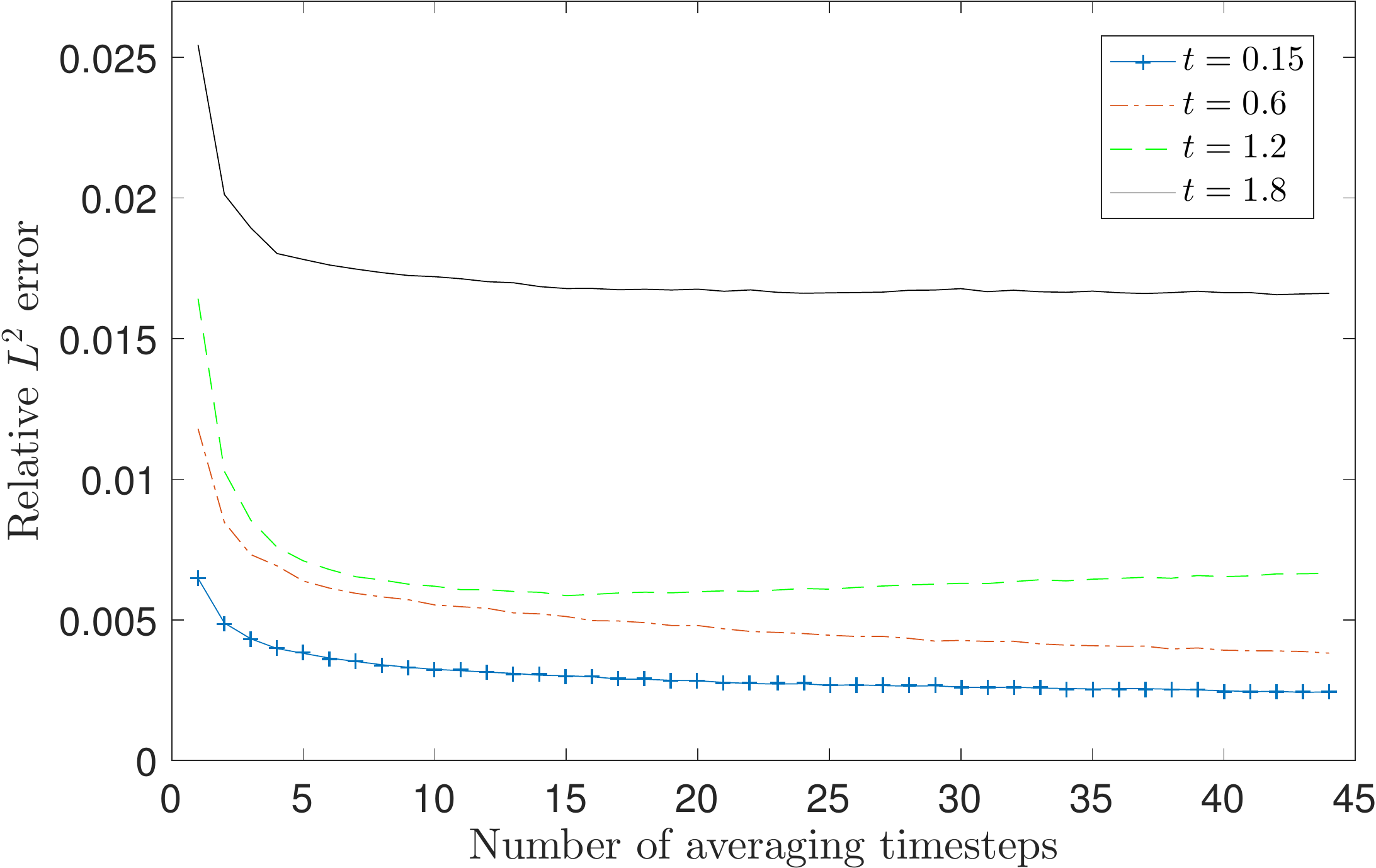}
\subcaption{Error in the discretized distribution function.}\label{fig:optimtimestep:b}
\end{minipage}
\caption{Extrapolation error as a function of the number of fine-scale time steps for each macro step projection cycle, Run $2$.}\label{fig:optimtimestep}
\end{figure}
The EFPI algorithm presents two main parameters, both of which require some tuning to obtain optimal results: the coarse time step $\Delta t$ and the number of fine-scale time steps needed during each projective integration cycle $N_e$. 
As noted in~\cite{Shay_2007}, in the first case $\Delta t$ is controlled by a Courant-type condition which is well understood. 

However, the choice of $N_e$ is more delicate. 
In~\cite{Shay_2007}, various choices ranging from $N_e = 10$ to $N_e = 20$ are analyzed numerically showing a strong effect on the error, with the simulation diverging quickly for $N_e \leq 14$ and almost indistinguishable results for $ N_e \geq 16$. 
We investigate here the influcence of $N_e$ on both the relative $L^2$ error in the ion density function, noted $\varepsilon$ (see also Eq. (33) in~\cite{Shay_2007}), and the relative $L^2$ error in the coefficients of the discretized distribution function over the whole phase space, which are defined as:
\begin{equation}
	\varepsilon = \sqrt{\frac{\sum_{j = 1}^{n_{x,\mathrm{c}}}(n_j^{EFPI} - n_j^{PIC})^2}{\sum_{j = 1}^{n_{x,\mathrm{c}}}(n_j^{EFPI})^2 + (n_j^{PIC})^2}}, \qquad \left \Vert f^{EFPI} - f^{PIC} \right \Vert_{L^2} = \sqrt{\frac{\sum_{k_x, k_v}(f^{\mathrm{i,c},EFPI}_{k_x, k_v} - f^{\mathrm{i,c},PIC}_{k_x, k_v})^2}{\sum_{k_x,k_v} (f^{\mathrm{i,c},EFPI}_{k_x, k_v})^2 + (f^{\mathrm{i,c},PIC}_{k_x, k_v})^2}}.
\end{equation}
Figure~\ref{fig:optimtimestep} shows both error functions for Run $2$ for various times up to $t = 1.8$ as a function of $N_e$.
We observe that the method is stable for all choices of $N_e \geq 1$. Furthermore, the error stagnates for $N_e \geq 6$, a value which was stable with respect to the number of particles used. This value of $N_e = 6$ is used in all results presented in this section (Runs $1$--$8$). 

\subsubsection{Numerical results}

Figures~\ref{fig:comparison1},~\ref{fig:comparison2} and~\ref{fig:comparison3} show results for runs $1$--$4$ (case $L = 4$) at times $t=1.5$, $3.5$ and $t=4.5$, respectively. We show the unnormalized ion distribution function for the PIC scheme (run~$1$) and our proposed wavelet-based projective method (run~$2$--$4$). The highly resolved PIC simulation is taken as the reference. We can see that as the wave propagates, it deforms and steepens as a shock develops. 

\begin{figure}[b]
\centering
\begin{minipage}{.49\textwidth}
	\centering
	\includegraphics[width=.8\textwidth]{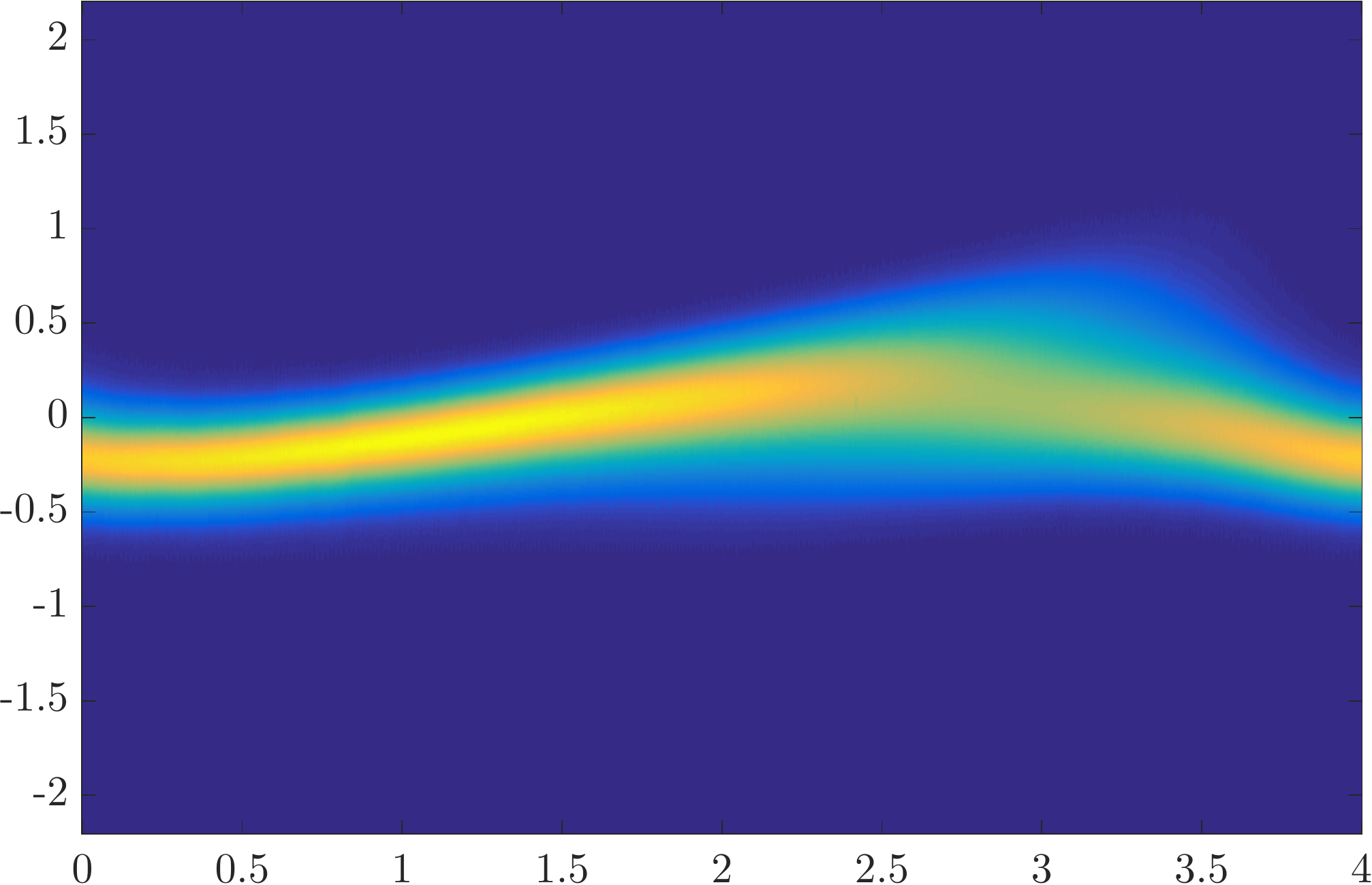}
	\subcaption{Fully resolved PIC solution (Run $1$)}\label{fig:comparison1:a}
\end{minipage}
\begin{minipage}{.49\textwidth}
	\centering
	\includegraphics[width=.8\textwidth]{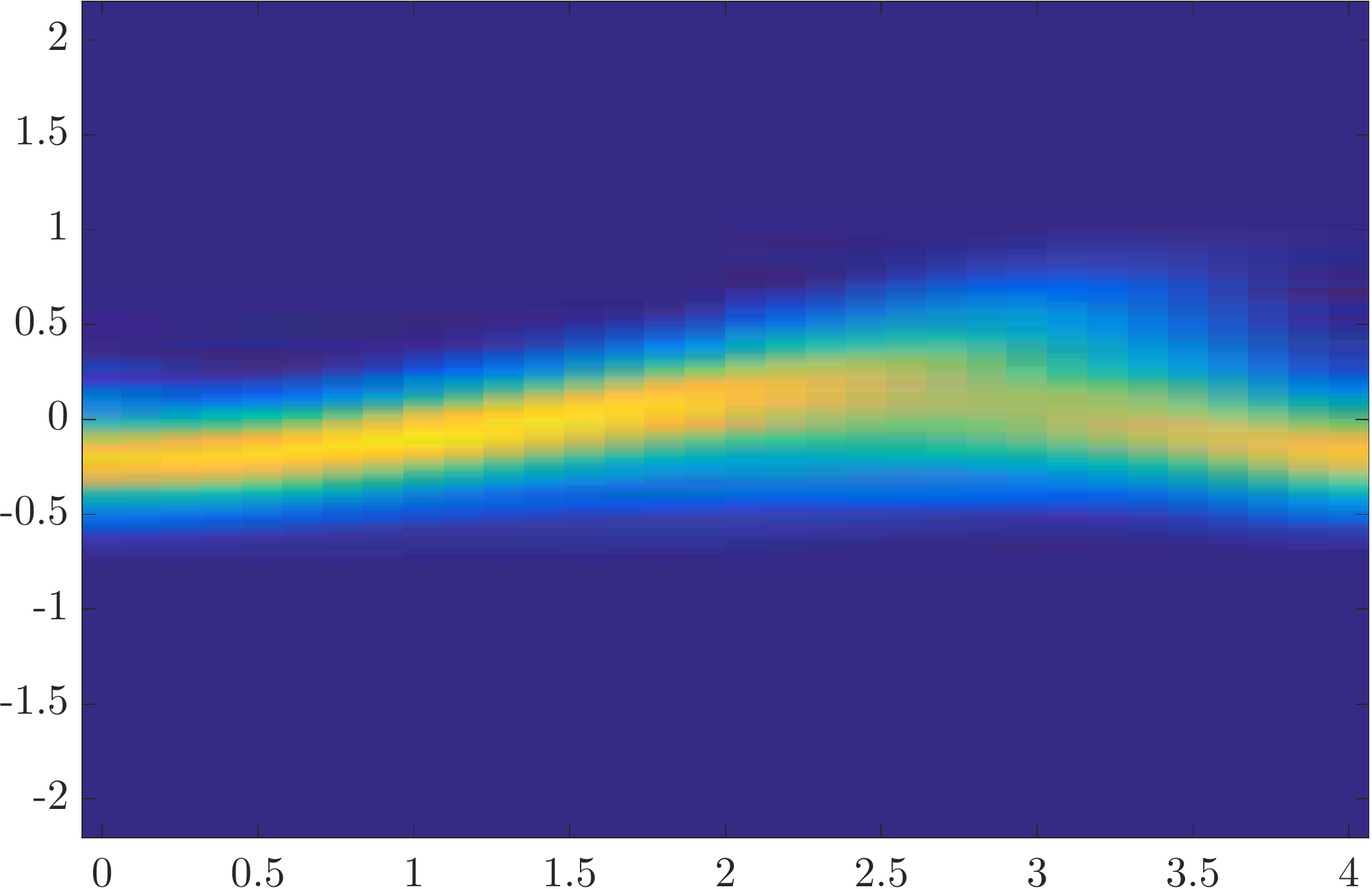}
	\subcaption{EFPI solution (Run $4$, $n_c = 32$)}\label{fig:comparison1:b}
\end{minipage}
\caption{Ion distribution function plots, Runs $1$ and $4$, $t = 1.5$.}\label{fig:comparison1}
\end{figure}
\begin{figure}[!t]
\centering
\begin{minipage}{.49\textwidth}
	\centering
	\includegraphics[width=.75\textwidth]{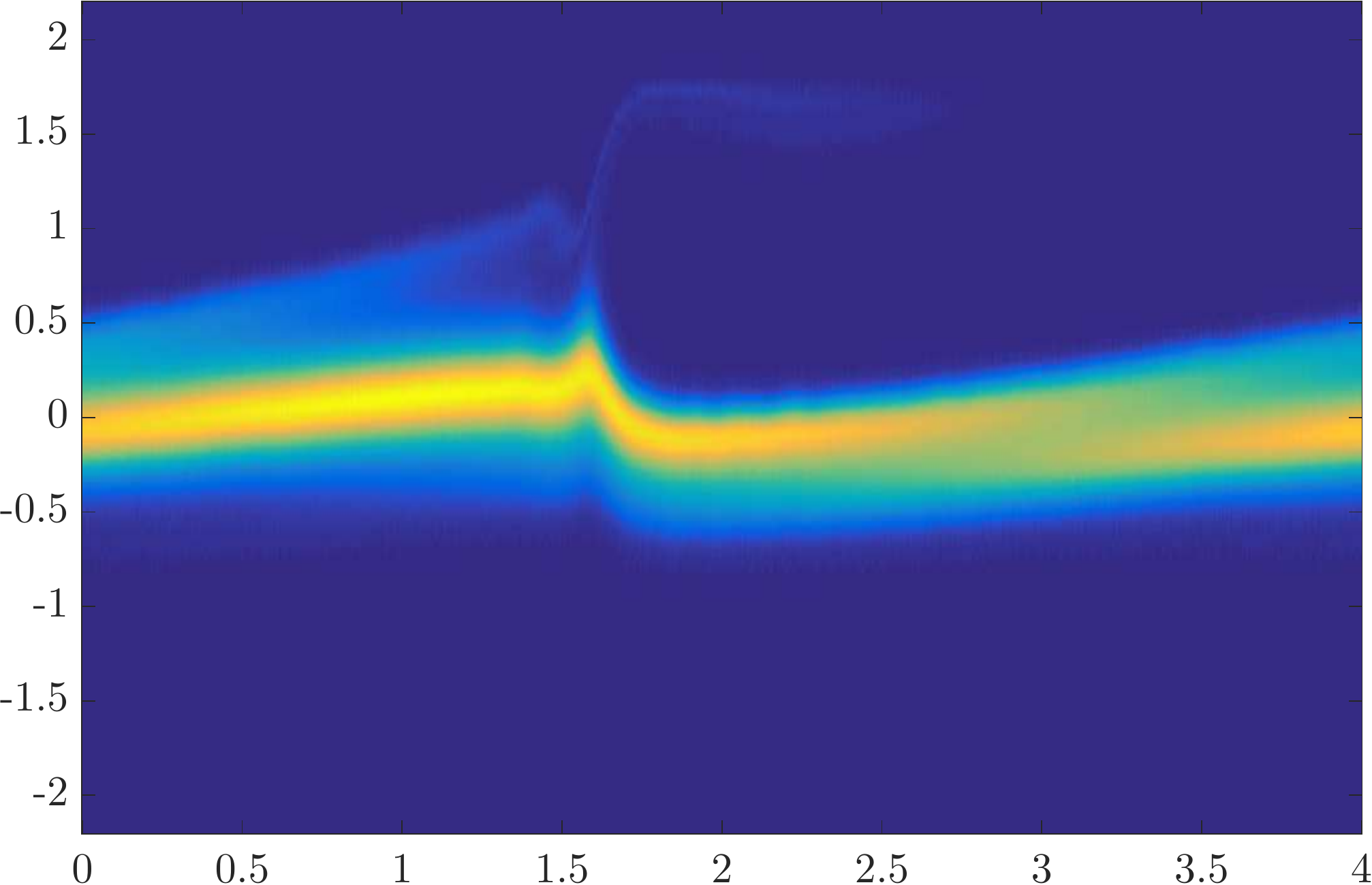}
	\subcaption{Fully resolved PIC solution (Run $1$)}\label{fig:comparison2:a}
\end{minipage}
\begin{minipage}{.49\textwidth}
	\centering
	\includegraphics[width=.75\textwidth]{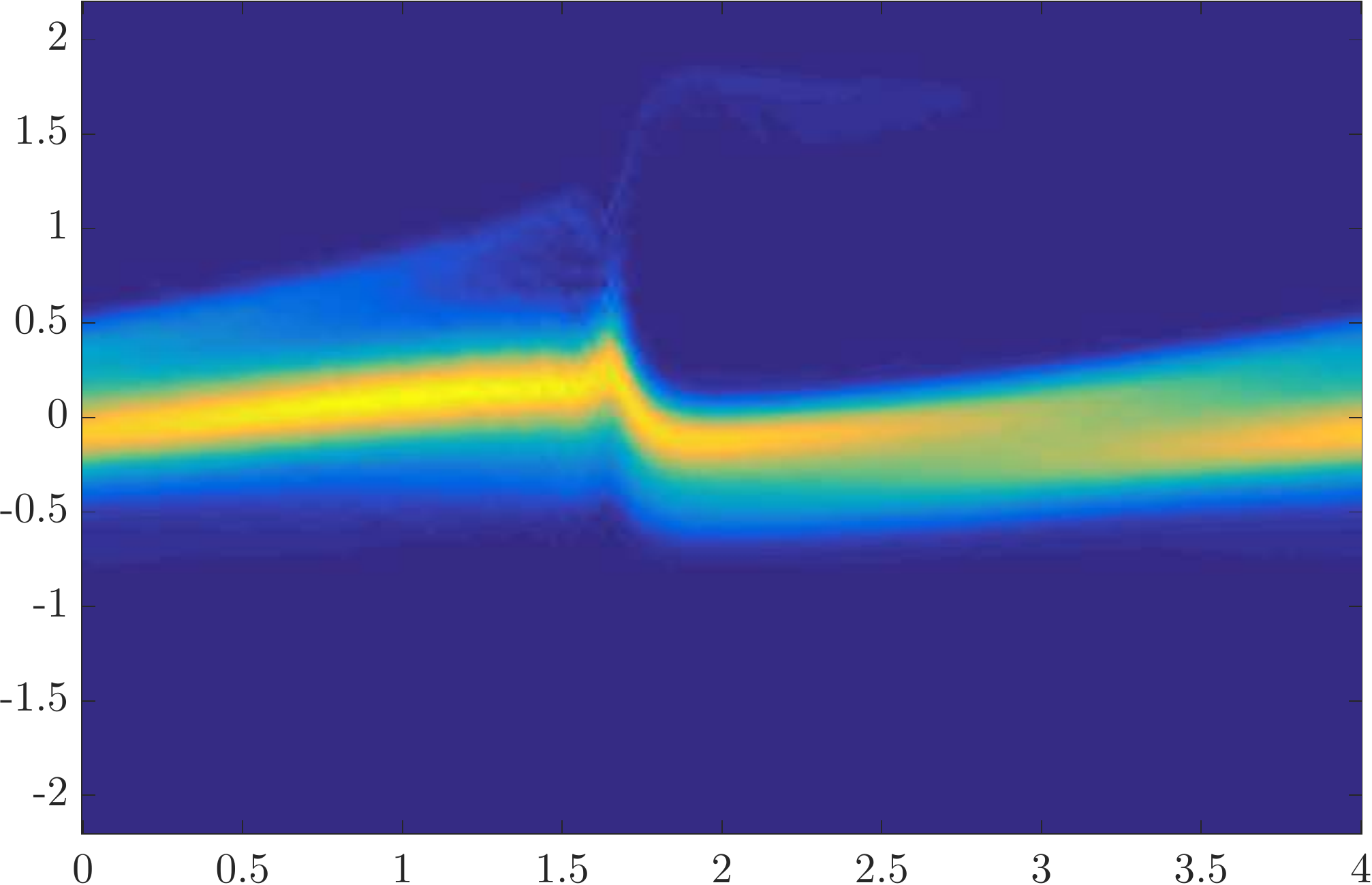}
	\subcaption{EFPI solution (Run $2$, $n_c = 512$)}\label{fig:comparison2:b}
\end{minipage}
\\
\begin{minipage}{.49\textwidth}
	\centering
	\includegraphics[width=.75\textwidth]{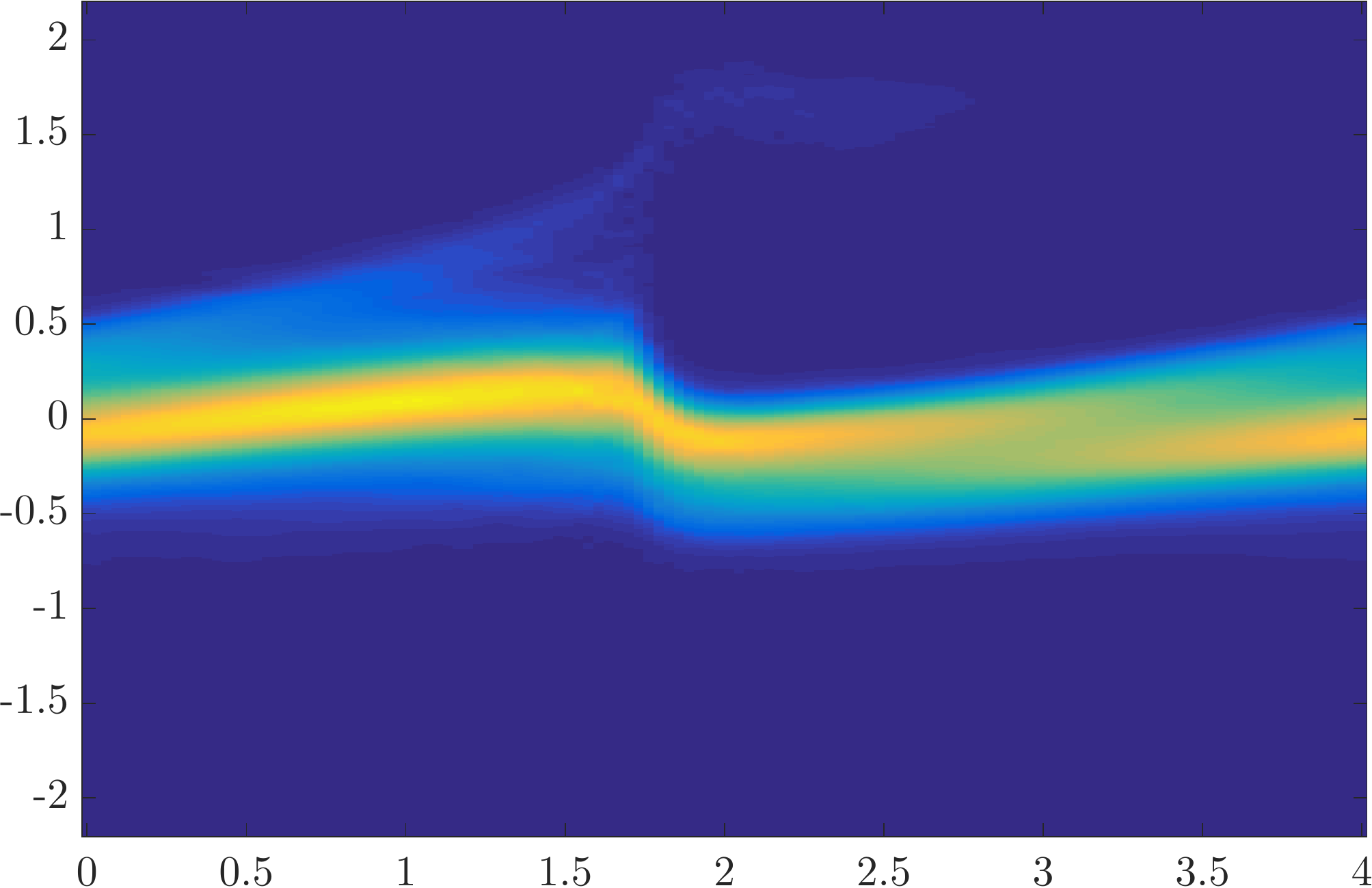}
	\subcaption{EFPI solution (Run $3$, $n_c = 128$)}\label{fig:comparison2:c}
\end{minipage}
\begin{minipage}{.49\textwidth}
	\centering
	\includegraphics[width=.75\textwidth]{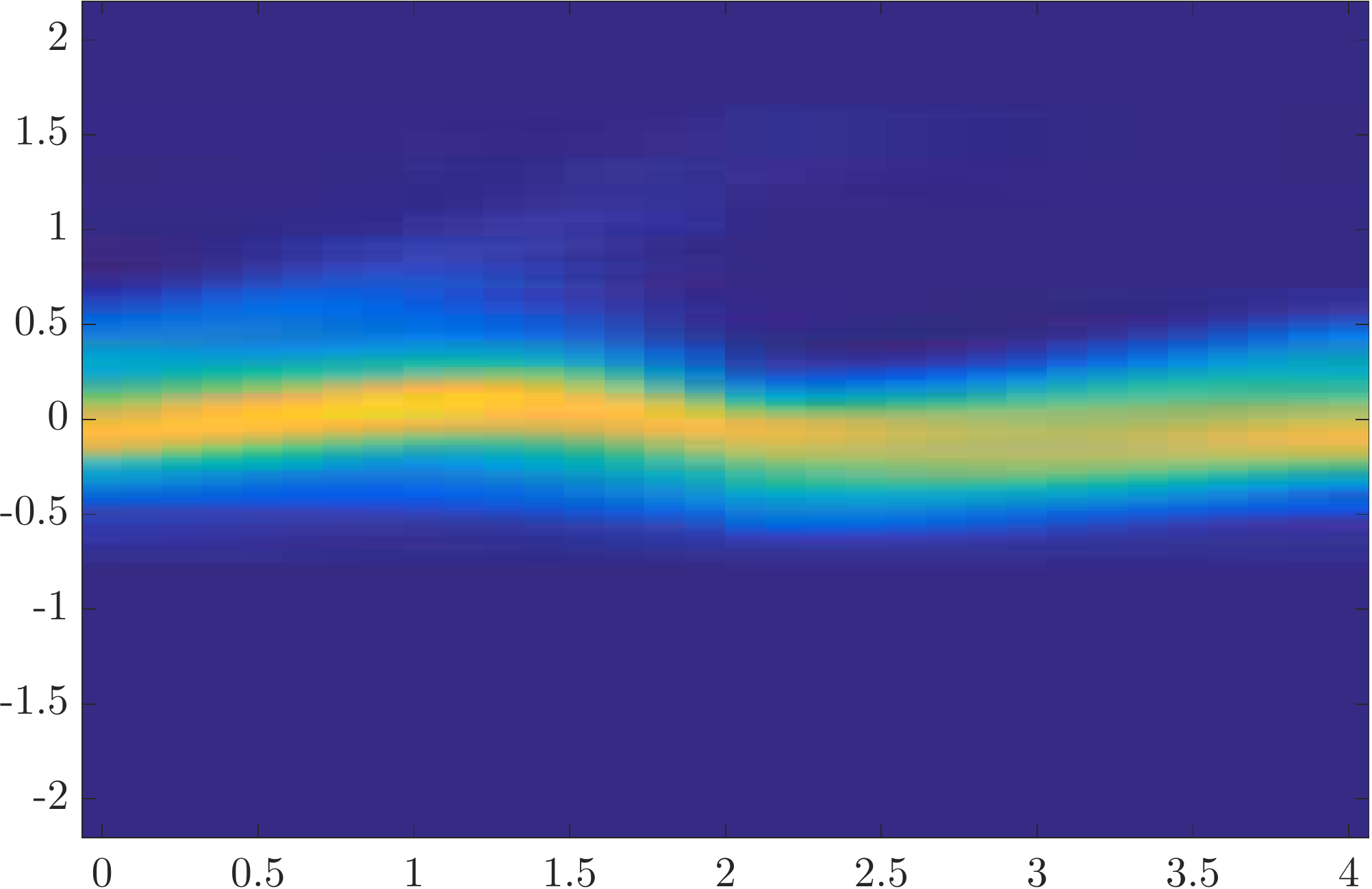}
	\subcaption{EFPI solution (Run $4$, $n_c = 32$)}\label{fig:comparison2:d}
\end{minipage}
\caption{Ion distribution function plots, Runs $1$--$4$, $t = 3.5$.}\label{fig:comparison2}
\end{figure}
At time $t=1.5$, the ion distribution function is very smooth. 
The different schemes are in good agreement with each other and all main features are well represented. 
The propagation speed of the ion wave is very well matched by the proposed method. 
Note that the upper velocity tail starts to deviate from the Maxwellian distribution. 
This result shows that the (linear) thresholding of the wavelet coefficients is very efficient and already very well adapted to use in the EFPI algorithm.

At time $t=3.5$, a ion shock has fully developed. As can be seen on Figure~\ref{fig:comparison2}, the accuracy of the wavelet-based EFPI solution is very dependent on the level of coarsening in space. Phase-space features of the shock are very well reproduced when $512$ grid points are used for the projective integration (no spatial coarsening), but are completely smoothed out when the solution is coarsened using $32$ points. 
 \begin{figure}[!b]
\centering
\begin{minipage}{.49\textwidth}
	\centering
	\includegraphics[width=.8\textwidth]{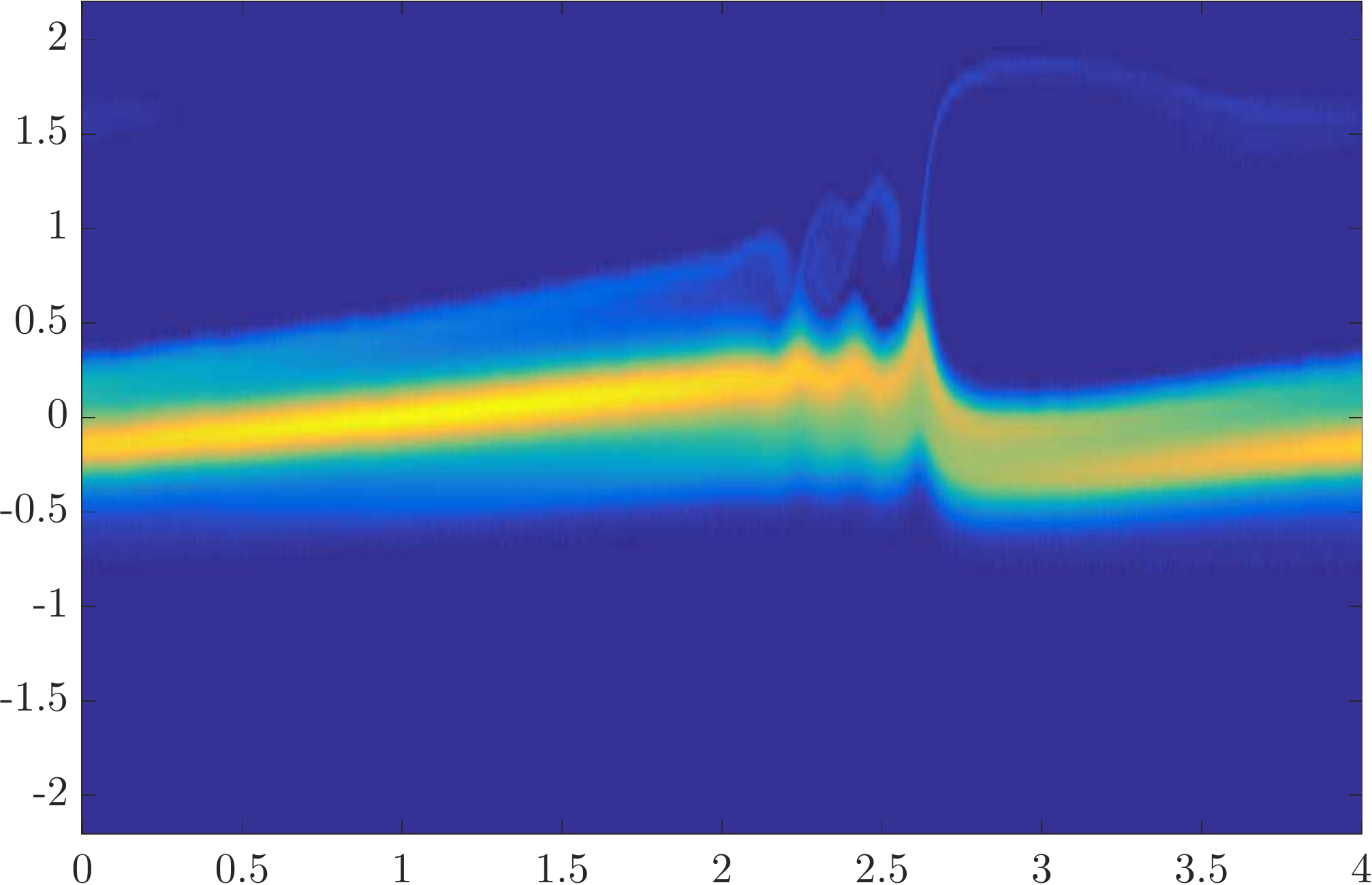}
	\subcaption{Fully resolved PIC solution (Run $1$)}\label{fig:comparison3:a}
\end{minipage}
\begin{minipage}{.49\textwidth}
	\centering
	\includegraphics[width=.8\textwidth]{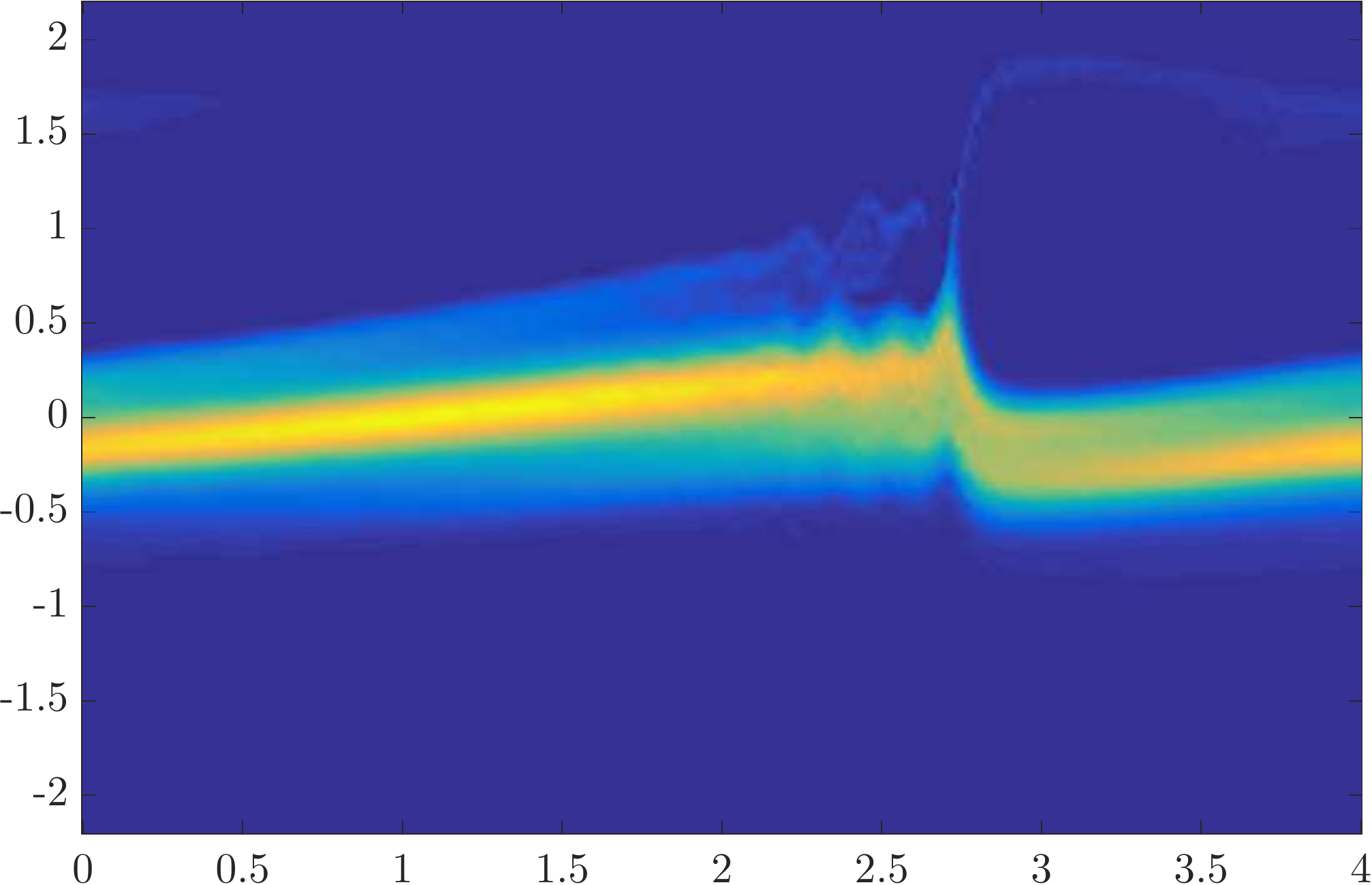}
	\subcaption{EFPI solution (Run $2$, $n_c = 512$)}\label{fig:comparison3:b}
\end{minipage}
\caption{Ion distribution function plots, Runs $1$--$2$, $t = 4.5$.}\label{fig:comparison3}
\end{figure}

 At the final time $t = 4.5$, the reflected ions as well as smaller trailing shocks form distinctive features, which are again well reproduced in the finer EFPI solution.
\begin{remark}
 	This level of diffusion was not present in the EFREE study~\cite{Shay_2007}, where knowledge of the ion sound speed is used to track the ion sound wave in a uniformly co-moving frame. However this transformation did not yield more accurate results with the present method which tracks the full phase-space.
 \end{remark}

\begin{figure}[!t]
\centering
\begin{minipage}{.49\textwidth}
	\centering
	\includegraphics[width=.8\textwidth]{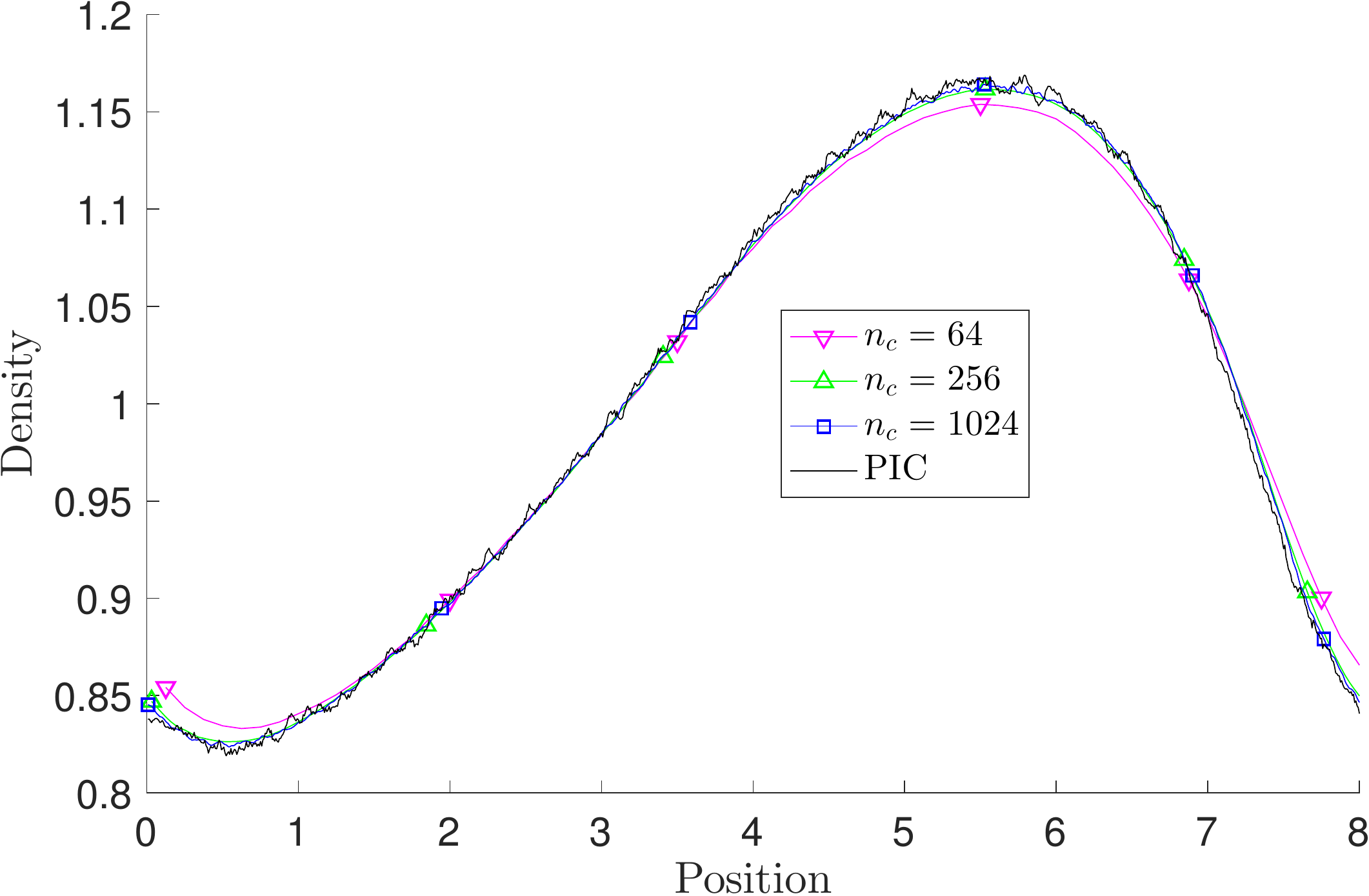}
	\subcaption{Ion density versus $x$}\label{fig:comparison4:a}
\end{minipage}
\begin{minipage}{.49\textwidth}
	\centering
	\includegraphics[width=.8\textwidth]{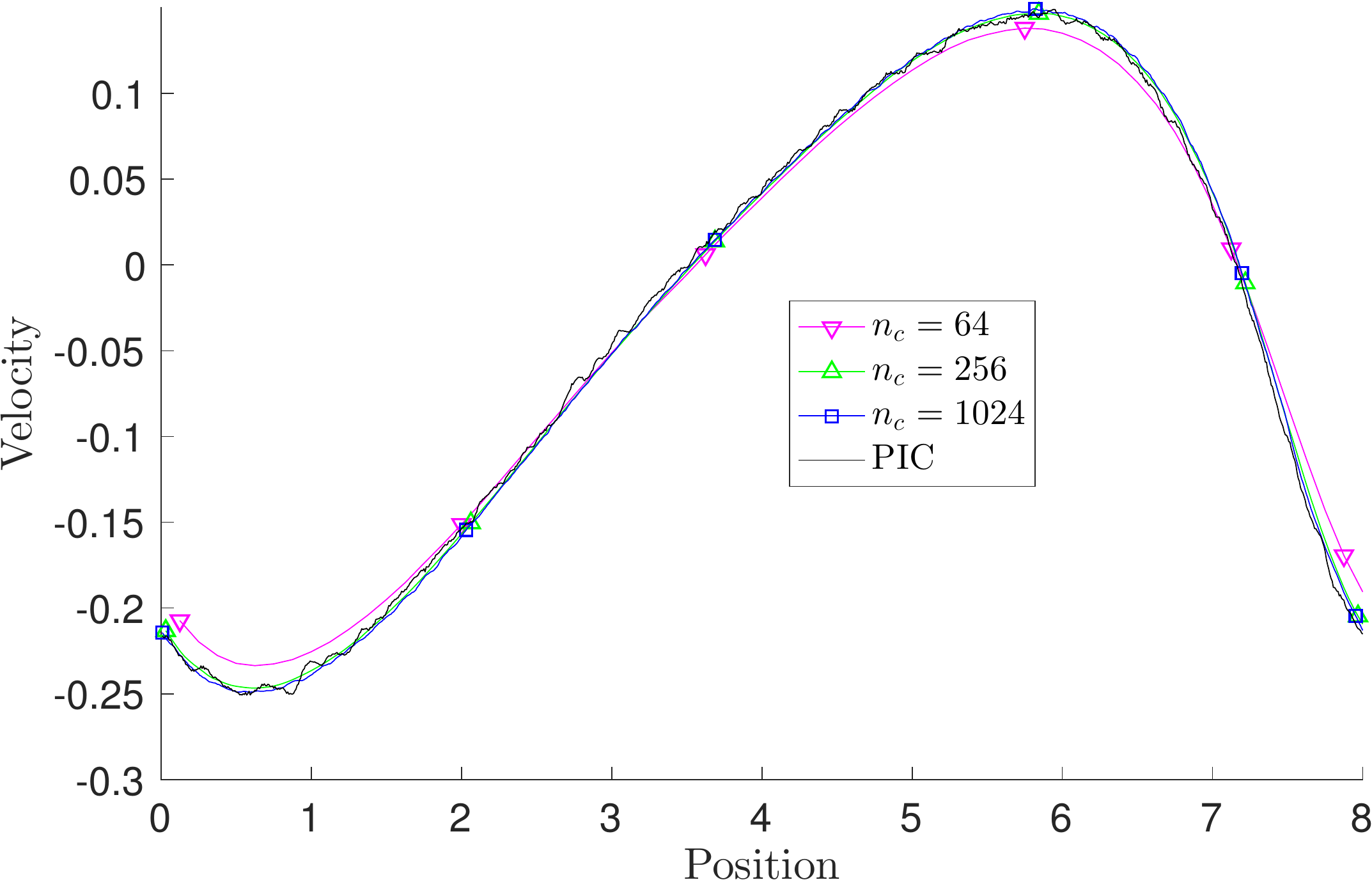}
	\subcaption{Ion velocity versus $x$}\label{fig:comparison4:b}
\end{minipage}
\\
\begin{minipage}{.49\textwidth}
	\centering
	\includegraphics[width=.8\textwidth]{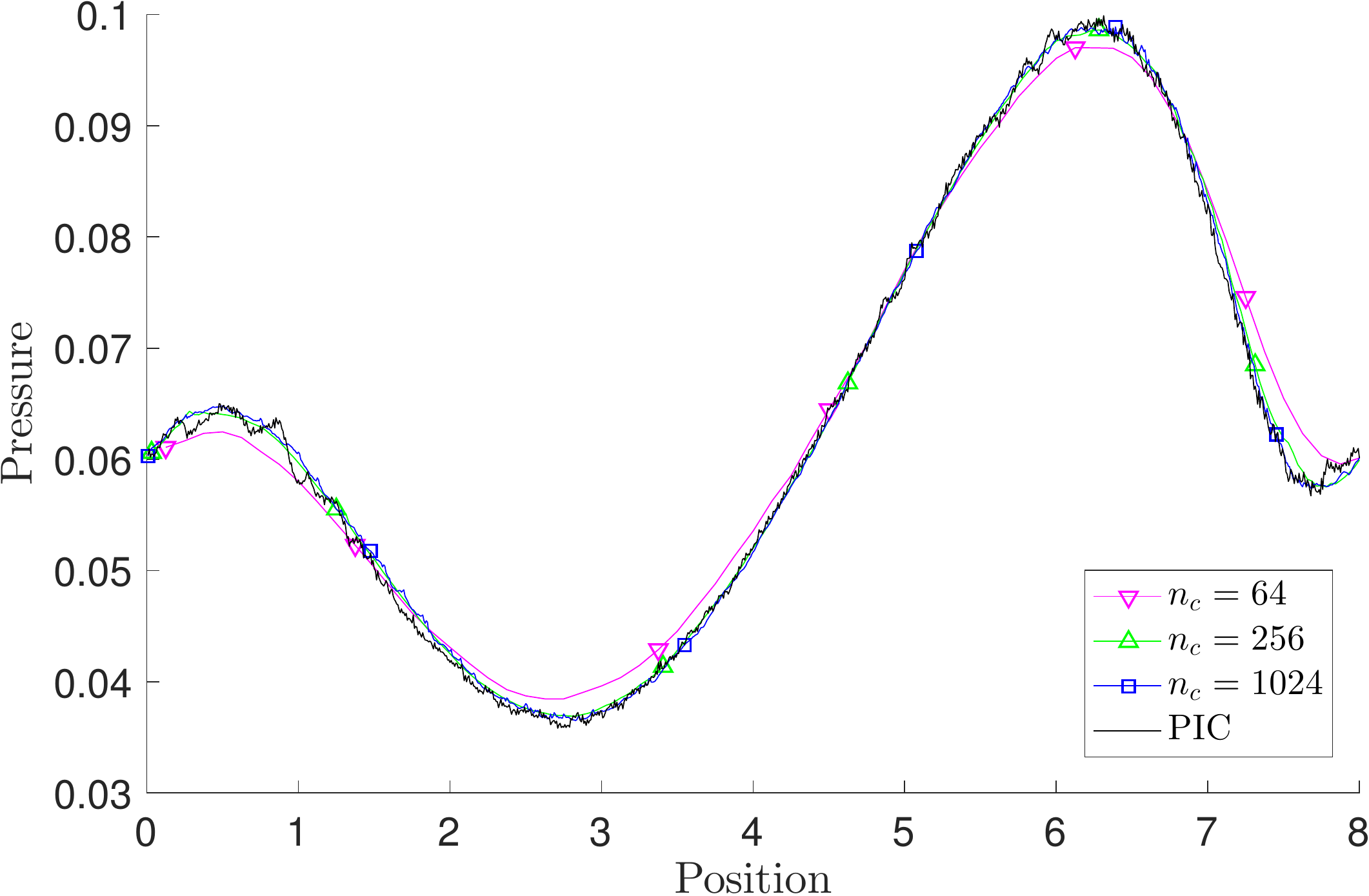}
	\subcaption{Ion pressure versus $x$}\label{fig:comparison4:c}
\end{minipage}
\begin{minipage}{.49\textwidth}
	\centering
	\includegraphics[width=.8\textwidth]{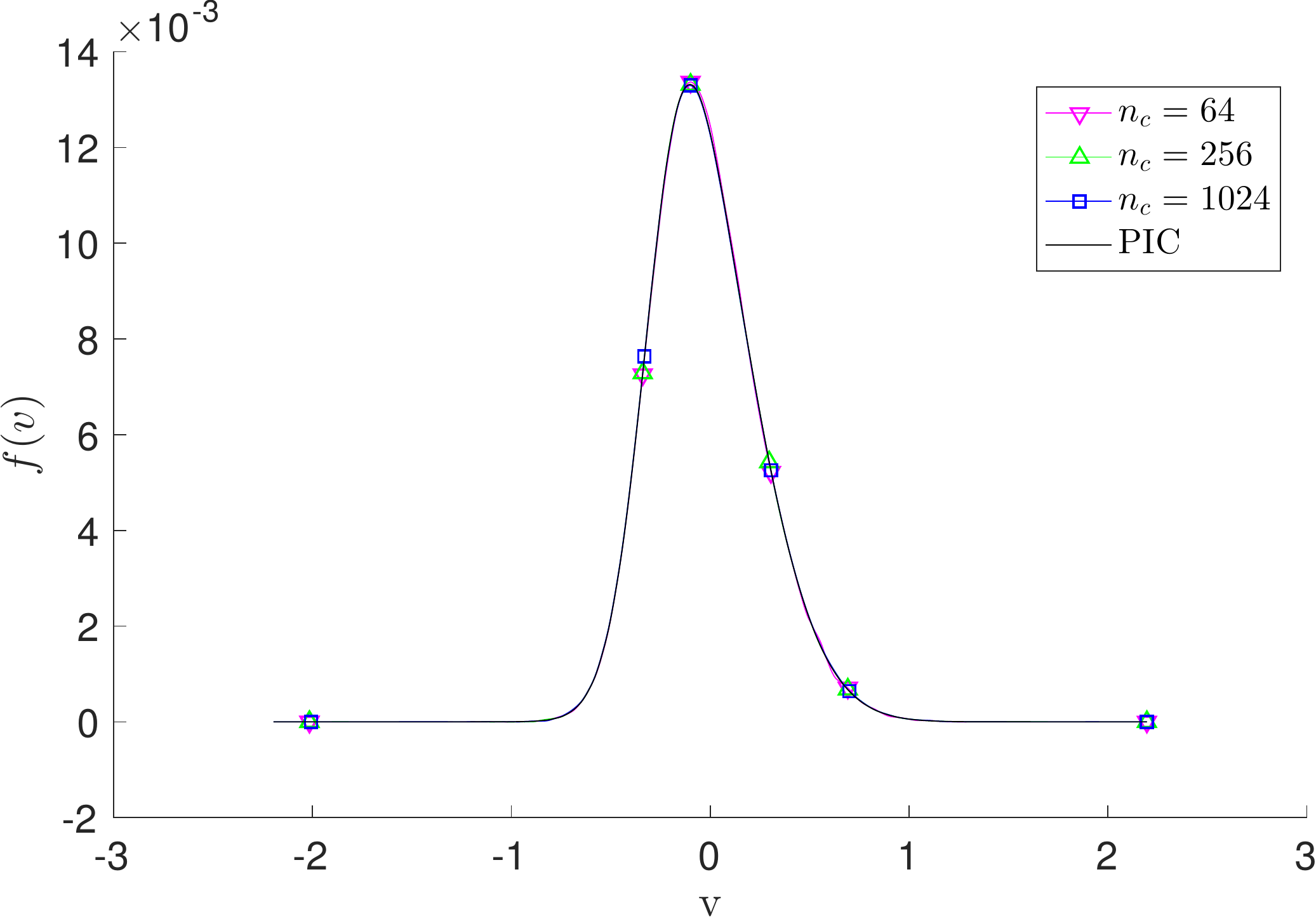}
	\subcaption{Integrated ion velocity distribution}\label{fig:comparison4:d}
\end{minipage}
\caption{Comparison between moments obtained with PIC (Run $5$) and EFPI (Runs $6$--$8$) at $t = 3$.}\label{fig:comparison4}
\end{figure}

Next, we compare the moments and the velocity distribution obtained for the case of a larger wave number ($L = 8$) at times $t=3$ and $t=7$.
At time $t = 3$, Figure~\ref{fig:comparison3}, there is reasonable agreement between all the schemes, although the coarsest solution with $n_\mathrm{x,c} = 64$ shows some diffusion. In particular, the anharmonic shape for the ion pressure and the asymmetric velocity distribution are well reproduced.

At time $t = 7$, Figure~\ref{fig:comparison4}, the accuracy of the EFPI solution is highly dependent on the resolution which is chosen. In particular, the narrow density and velocity peaks at the shock are well reproduced by the fine EFPI solution ($1024$ grid points), but absent from the coarser solutions. The overall shape is still well reproduced using only $256$ grid points, but clearly overdamped by numerical diffusion when using only $64$ points. The ion velocity distribution shows the same behavior, with the second peak corresponding to ions reflected by the shock well reproduced for the finer EFPI solutions.

\begin{figure}[!t]
\begin{minipage}{.49\textwidth}
	\centering
	\includegraphics[width=.8\textwidth]{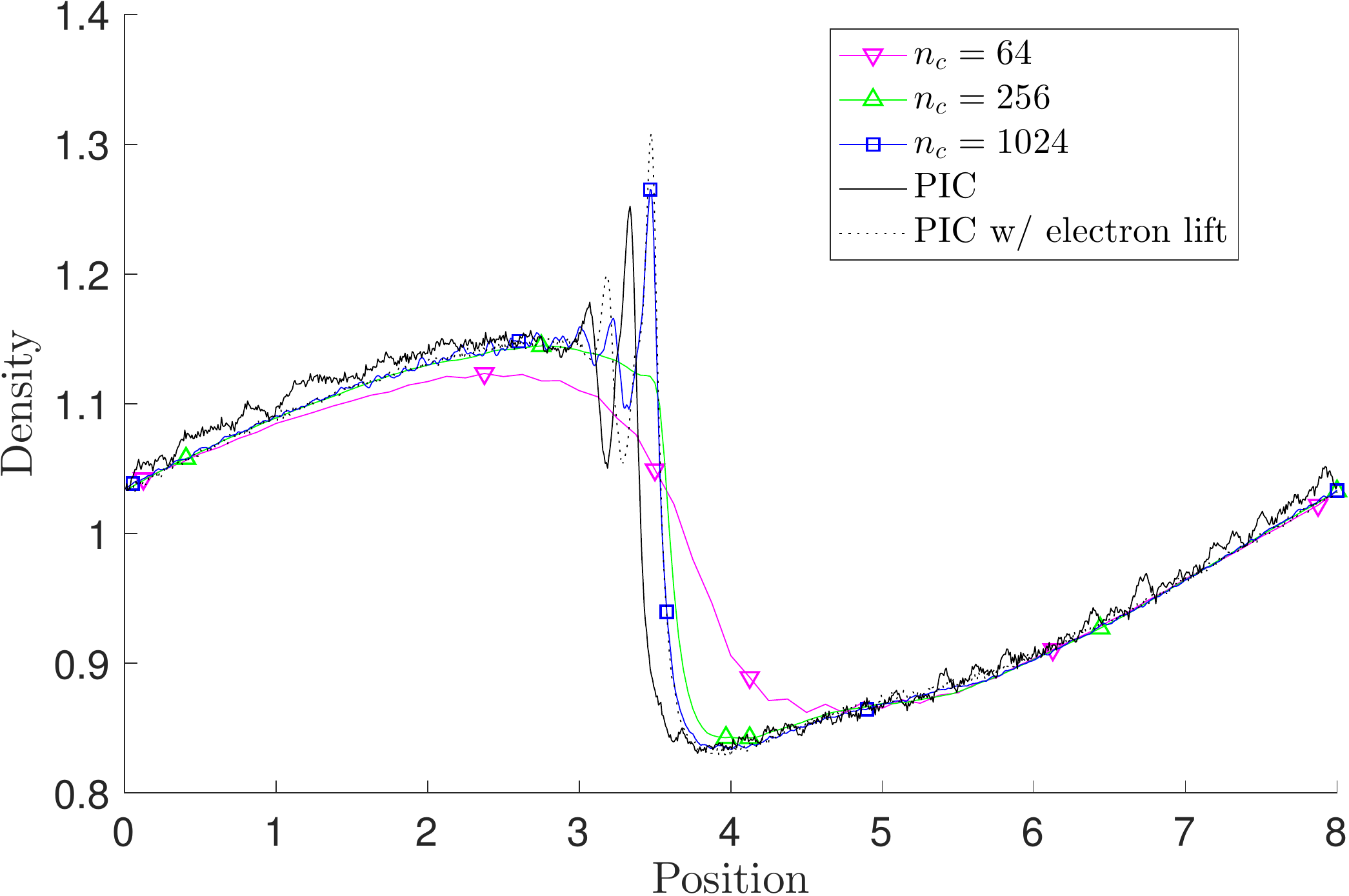}
	\subcaption{Ion density versus $x$}\label{fig:comparison5:a}
\end{minipage}
\begin{minipage}{.49\textwidth}
	\centering
	\includegraphics[width=.8\textwidth]{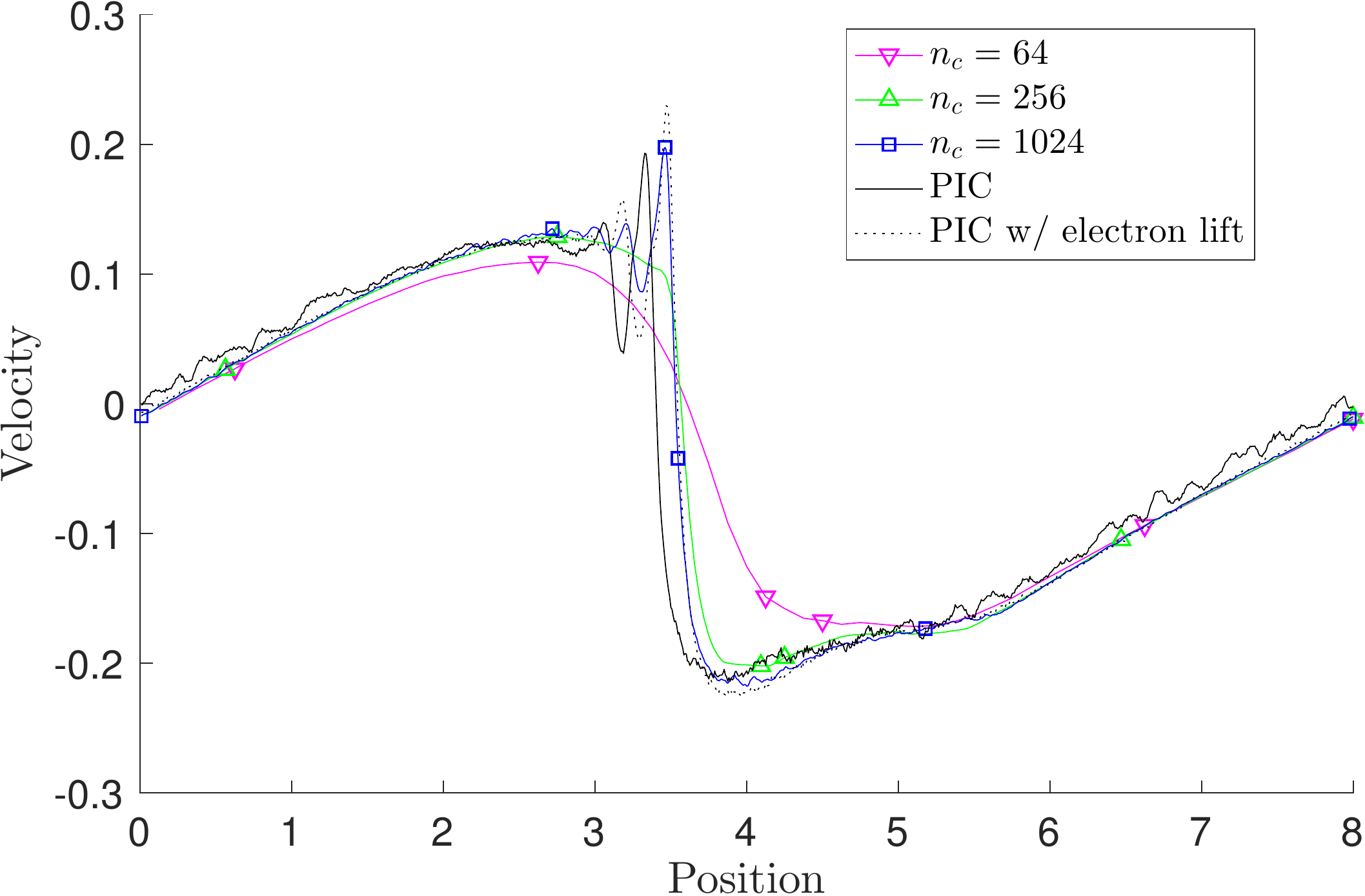}
	\subcaption{Ion velocity versus $x$}\label{fig:comparison5:b}
\end{minipage}
\\
\begin{minipage}{.49\textwidth}
	\centering
	\includegraphics[width=.8\textwidth]{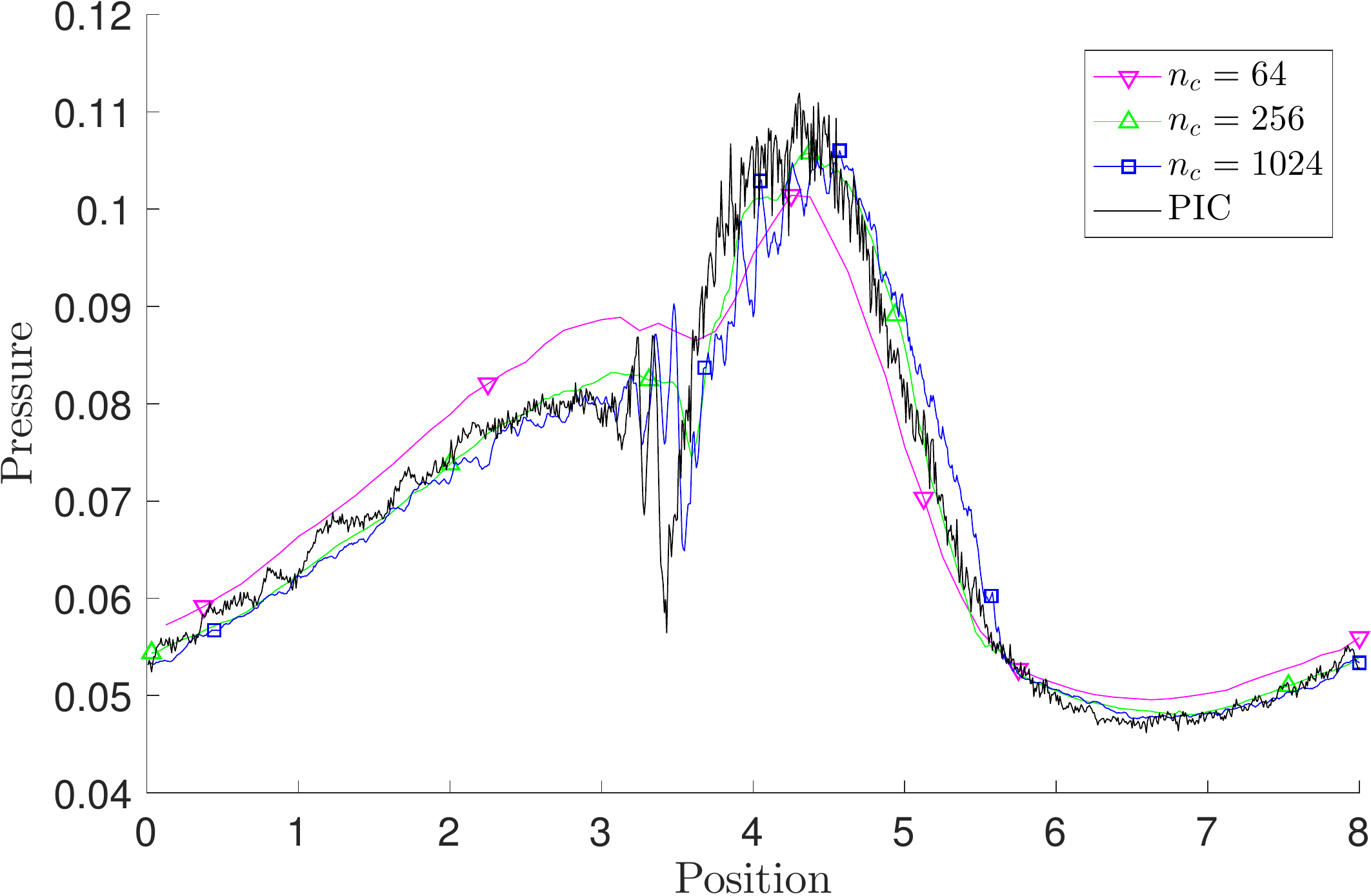}
	\subcaption{Ion pressure versus $x$}\label{fig:comparison5:c}
\end{minipage}
\begin{minipage}{.49\textwidth}
	\centering
	\includegraphics[width=.8\textwidth]{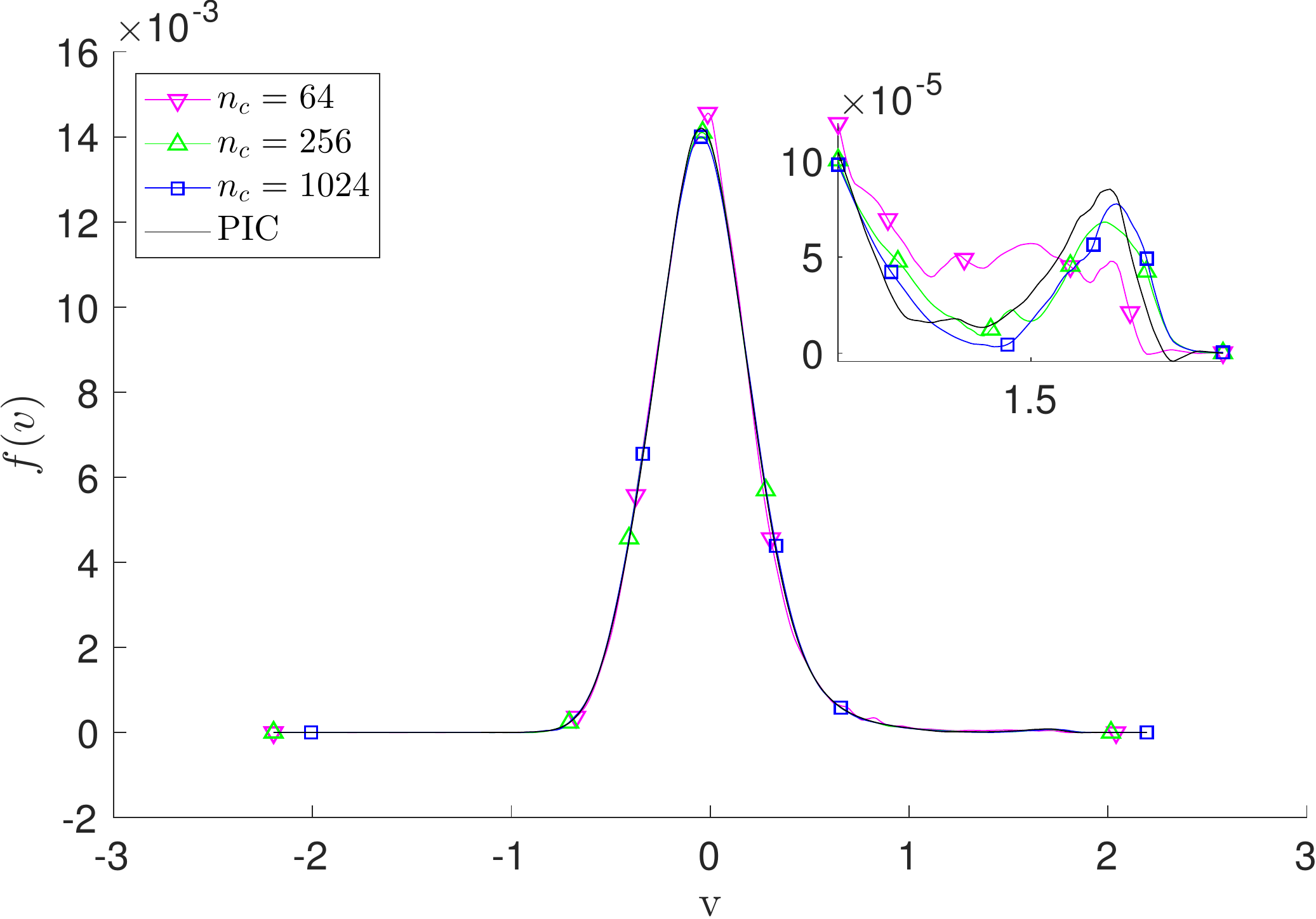}
	\subcaption{Integrated ion velocity distribution, with an inset showing the reflected ion tail bump}\label{fig:comparison5:d}
\end{minipage}
\caption{Comparison between moments obtained with PIC (Run $5$) and EFPI (Runs $6$--$8$) at $t = 7$.}\label{fig:comparison5}
\end{figure}

\subsubsection{Discussion} \label{sec:DiscussionIonWave}
Strongly non-Maxwellian effects appear progressively during the ion acoustic wave propagation. 
In particular one observes reflected particles and a double-peaked ion distribution function at the shock. 
The simple system representation by shifted Maxwellians in the original EFREE implementation~\cite{Shay_2007} lead to noticeable differences with the reference simulation. 
In particular, the ion pressure diverges quickly and does not develop the correct strongly anharmonic shape. We have illustrated here how the wavelet-based equation free method substantially improves on these results. 
Our projective integration scheme gives accurate results even in the strongly nonlinear regime, after the creation of the shock, while using a large macroscopic time step. This shows that the EFPI framework can be effective in a kinetic setting, provided a non-parametric representation of the general non-maxwellian ion distribution function is used.


\begin{figure}[t]
	\centering
	\includegraphics[width=.5\textwidth]{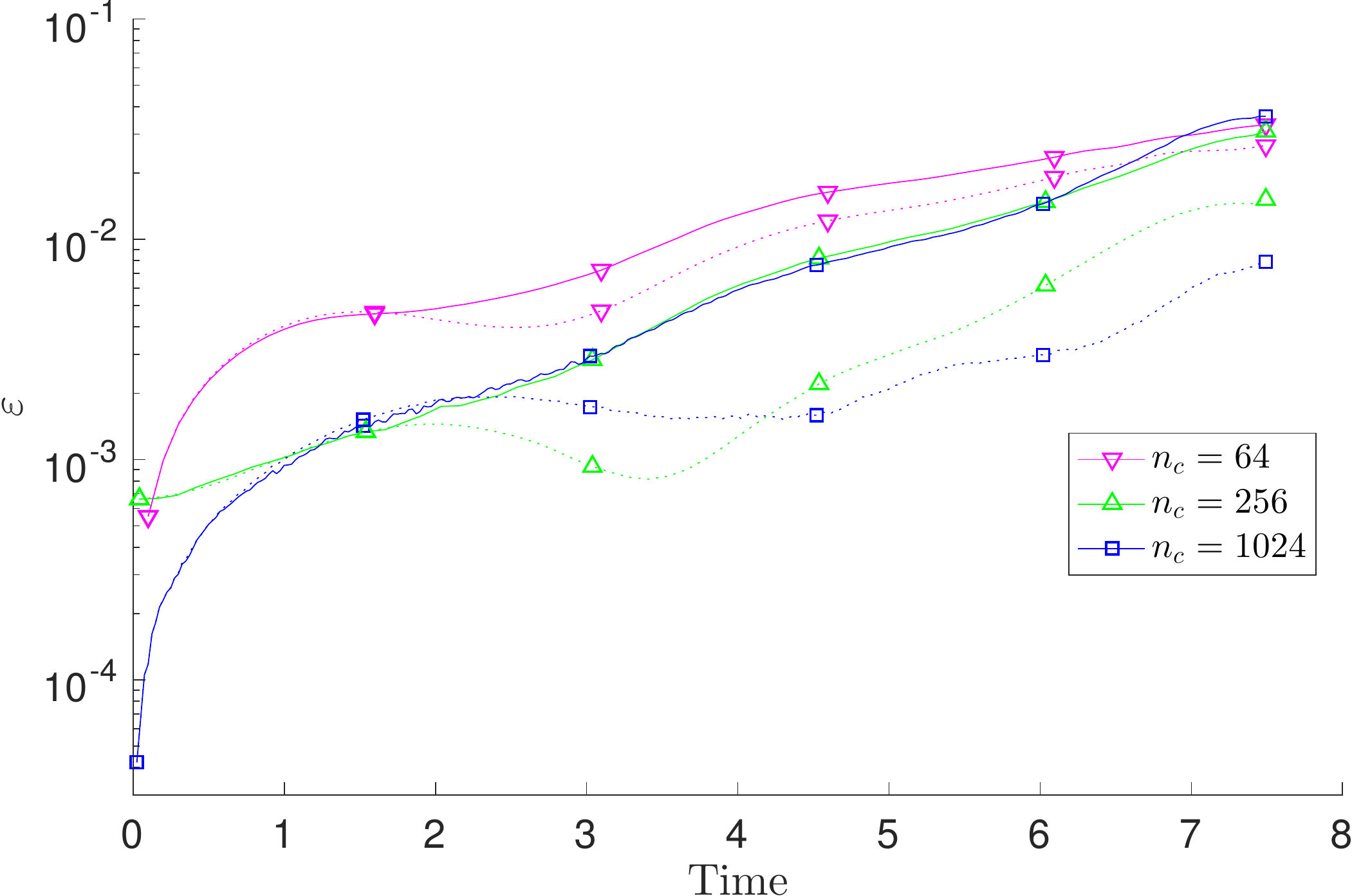}
	\caption{Error $\varepsilon$ in the ion density compared to the fully resolved PIC solutions without (solid lines) and with (dotted lines) periodic reset of the electrons to the self-consistent adiabatic Maxwell-Boltzmann equilibrium.}\label{fig:error1}
	\vspace{10pt}
	\includegraphics[width=.5\textwidth]{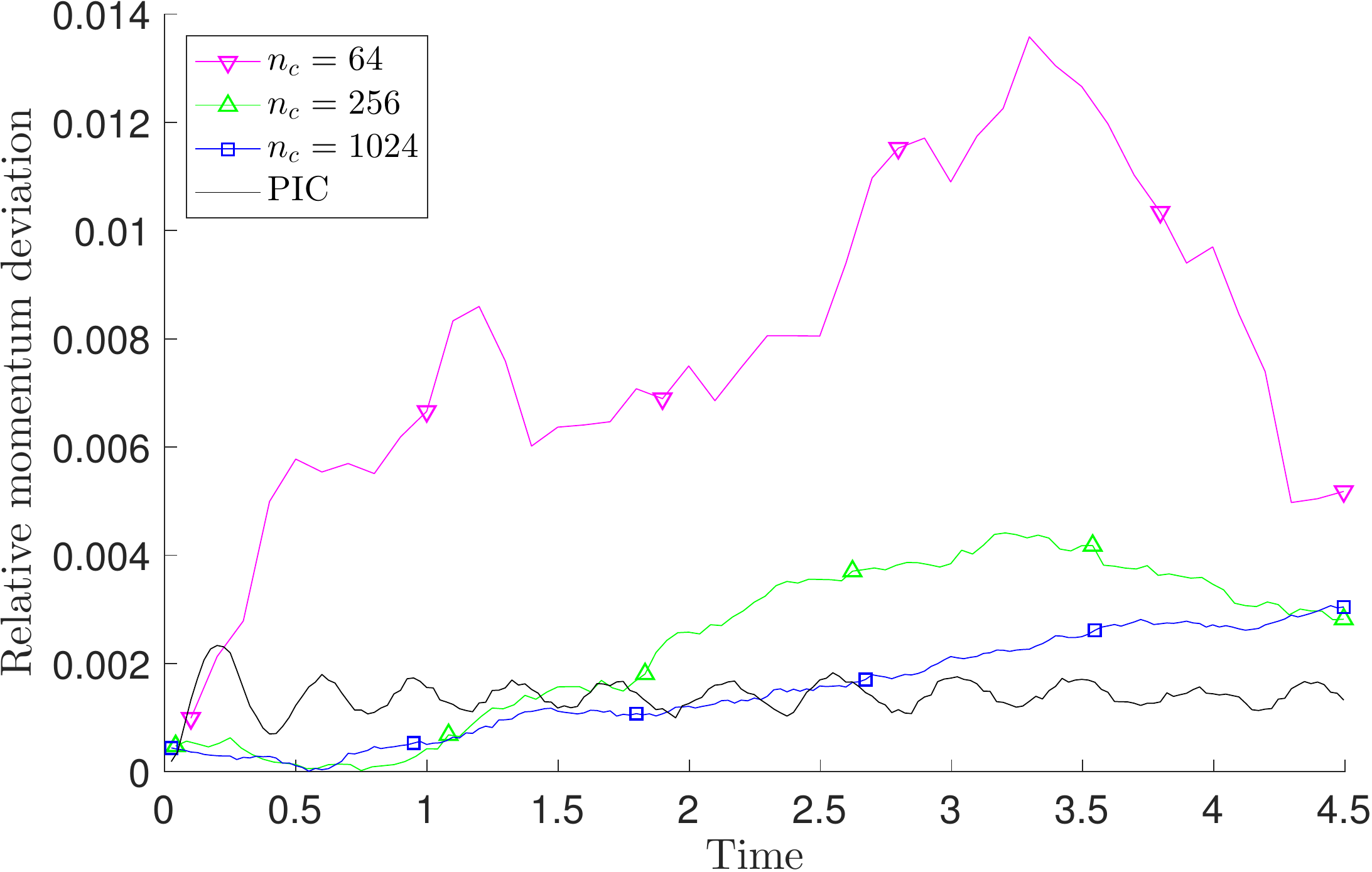}
	\caption{Relative total ion momentum deviation (Runs $5$--$8$)}\label{fig:error2}
\end{figure}
\paragraph{Convergence analysis and modeling error}
We present on Figure~\ref{fig:error1} the density error $\varepsilon$ as a function of time for Runs $6$--$8$, as defined in~\eqref{eq:errormeasures}. Solid lines indicate the error relative to the reference PIC solution (Run $5$). The error increases slowly with time and is quite well controlled, after the initial period where noise dominates. Note that increasing the resolution from $n_{x,\mathrm{c}} = 256$ to $1024$ grid points has little effect on the error apart from the very beginning, but the highly underresolved solution with $n_{x,\mathrm{c}} = 64$ is as expected much less accurate overall.

An important component of the error is that the wave speed obtained with EFPI is slightly too high. Comparing the density peaks at the shock front for the case where $L=8$, we observe a position difference of approximately $+0.12$ with respect to the reference PIC solution at time $t = 7$ (see Fig.~\ref{fig:comparison5:a}). It is interesting to note that this difference can be completely explained by the assumption of Boltzmann-distributed Maxwellian electrons which is implicit in the EFPI method. Indeed, resonant electrons are trapped by the shock potential, inducing a slight deviation from the Maxwellian distribution~\cite{Schamel1973}. This effect is neglected by the EFPI representation.
 To show this, we perform a full PIC simulation with the same parameters as Run~$5$, where in addition the electrons particles are periodically reset to the self-consistent Boltzmann equilibrium. More precisely, every $150$ PIC timesteps, the electron phase space distribution is recalculated from the ion density using equations~\eqref{eq:poisson2rescaled} and~\eqref{eq:electronpassive}. The result is plotted in dotted lines on Figures~\ref{fig:comparison5:a} and~\ref{fig:comparison5:b}, and shows an very good agreement with the finest EFPI solution. 

 In addition, we plot on Figure~\ref{fig:error1} the density error with respect to this modified PIC solution, in dotted lines. We see that the error is much improved, especially at later times where Debye-scale structures appear at the shock front. At the final time $t = 8$, with a fully developed shock, the EFPI solution converges to this new modified PIC solution as the macro grid step and the macro time step are reduced. 

\paragraph{Stability and the need for a particle deposition scheme with integration along approximate characteristics}

\begin{table}[t]
\centering
\begin{tabular}{|c|c|c|}
 	\hline
	 \multirow{2}{*}{Number of grid points}	& 	\multicolumn{2}{|c|}{Maximum stable timestep for EFPI} 			\\
	\cline{2-3}
	 					 	& Integration along characteristics &		No integration along characteristics   	\\
	\hline \hline
			$512$			& 		$0.028$ ($170 \delta t$)	&			$0.005$   ($30 \delta t$) 			\\
	\hline
			$128$			& 		$0.058$ ($350 \delta t$)	&			$0.012$   ($75 \delta t$) 			\\
	\hline
			$32$			& 		$0.104$ ($625 \delta t$)	&			$0.050$ ($300 \delta t$) 			\\
	\hline
\end{tabular}
\caption{Maximum projective time steps ensuring stability over the time interval $0 \leq t \leq 5$ for the case $L = 4$.}
\label{tab:stability}
\end{table}

Maximum stable time steps, listed in Table~\ref{tab:stability}, show that the method can be used to directly exploit the large difference between electron and ion timescales, without the need for coarsening.
When a larger macro grid step can be chosen, this can be exploited to use even bigger time steps.

An interesting test is to project $f^\mathrm{i,c}$ directly instead of the integration along the characteristics~\eqref{def:explicitflow}--\eqref{eq:extrapolation2} employed here, using a straightforward deposition of the particles on the fine grid. As seen in Table~\ref{tab:stability}, this approach requires the use of a much smaller time step, especially at higher resolution where the electric field becomes very strong at the shock front.
\begin{remark}
	Due to the use of a phase space grid, the stability constraint is influenced by the local electric field which grows nonlinearly as the shock front develops.
	Since numerical diffusion tends to smooth the resulting electric field peak, depending on the coarsening level, the resulting stability condition is not as easy to interpret as in the EFREE case. 
	Indeed, the use of a spatial-only grid results on a Courant condition based solely on the macroscopic ion sound wave speed~\cite{Shay_2007}.
\end{remark}

\paragraph{Particle noise}
Numerical experiments show that random-like fluctuations due to the PIC codes necessitate a careful treatment. 
As we have shown, the level of fluctuations in the active (ion) variables has been successfully reduced by our proposed wavelet-based approach due to (a) linear smoothing from fine to coarse grid, (b) denoising by linear or nonlinear thresholding of wavelet coefficients, (c) projection along approximate characteristics, (d) quiet start. 
This last factor appears particularly efficient since the fine-scale integration is only realized for a small number of timesteps, enabling the PIC error to scale as $\mathrm{log}(N_p)/N_p$ instead of $1/\sqrt{N_p}$~\cite{Wollman_1996}.

\paragraph{Conservation of moments}
As noted in Remark~\ref{rem:chargeconservation}, charge is conserved by the EFPI approach. This is not however the case for the momentum or energy. Indeed, the implicit representation of electrons makes this a difficult task since ion momentum alone is not a conserved quantity. We plot in Figure~\ref{fig:error2} the relative deviation of the total ion momentum with respect to its initial value. The reference curve oscillates around $0.02$, as momentum is transfered to trapped electrons. This effect cannot be reproduced by the EFPI solutions which show larger deviations, but stay within reasonable bounds.

\subsection{One-dimensional plasma expansion test-case}
As a second test case, we consider a one-dimensional expansion problem, the expansion of a Gaussian plasma in vacuum~\cite{Dorozhkina1998,Baitin1998,Kovalev2002,Kovalev2003,Mora2005}. Note that this is similar to the case of a thin uniform slab described in~\cite{Grismayer_2006, Grismayer_2008}, used as a test case e.g. in~\cite{Degond_2010}. The ions initially uniformly occupy a Gaussian slab of characteristic radius $r_0 \gg L$, while the electrons are initialized as the adiabatic Maxwell-Boltzmann equilibrium in a self-consistent potential. The test problem requires the observation of the expansion of the ion slab, initialized with:
\begin{equation}
f^{i}(x,v, t=0) = C^\mathrm{i} \mathrm{exp}\left ( - \frac{1}{2} \left ( \frac{v}{v_\mathrm{th}^\mathrm{i}}  \right )^2  - \left (\frac{x}{r_0} \right )^2 \right ),
\end{equation}
where $C^\mathrm{i}$ is a renormalization constant, and the electron distribution function is deduced by solving the nonlinear Poisson-Boltzmann equation~\eqref{eq:poisson2rescaled} and hypothesis~\eqref{eq:electronpassive}. Note that this is the same procedure used when lifting the coarse ion data in the approach developed in this paper. In the quasineutral limit, an analytic self-similar solution exists~\cite{Dorozhkina1998,Kovalev2002,Kovalev2003} and the electron distribution function keeps a Maxwellian form with a time-dependent homogeneous temperature, an important fact for the applicability of the method presented in this paper. 
\begin{table}[t]
\centering
\begin{tabular}{|c|c|c|c|c|c|c|c|c|c|}
 \hline
Run & Type & $L$ & $r_0$ & $N_\mathrm{ppc}$ & $n_{x,\mathrm{f}}$ & $n_{x,\mathrm{c}}$ & $n_{v,\mathrm{f}}$ & $n_{v,\mathrm{c}}$ & $\Delta t / \delta t$ \\ 
\hline
\hline
9 & PIC & $1000$ & $40$ & $16384$ & $4096$ & $-$ & $-$ & $-$ & $-$\\
\hline
10 & EFPI & $1000$ & $40$ & $4096$ & $4096$ & $4096$ & $2048$ & $512$ & $250$\\
\hline
11 & EFPI & $1000$ & $40$ & $4096$ & $4096$ & $2048$ & $2048$ & $512$ & $400$\\
\hline
\end{tabular}

\caption{Simulation runs presented for this test case. $L$ is the domain length, $r_0$ the initial slab width, $N_\mathrm{ppc}$ the number of particles per species per cell, $n_{x,\mathrm{f}}$ and $n_{x,\mathrm{c}}$ ($n_{v,\mathrm{f}}$ and $n_{v,\mathrm{c}}$) respectively the number of fine and coarse space (velocity) grid points, $\delta t$ and $\Delta t$ respectively the fine and coarse time steps.} \label{tab:expparameters}
\end{table}

\begin{figure}[!b]
\centering
\begin{minipage}{.49\textwidth}
\centering
\includegraphics[width=\textwidth]{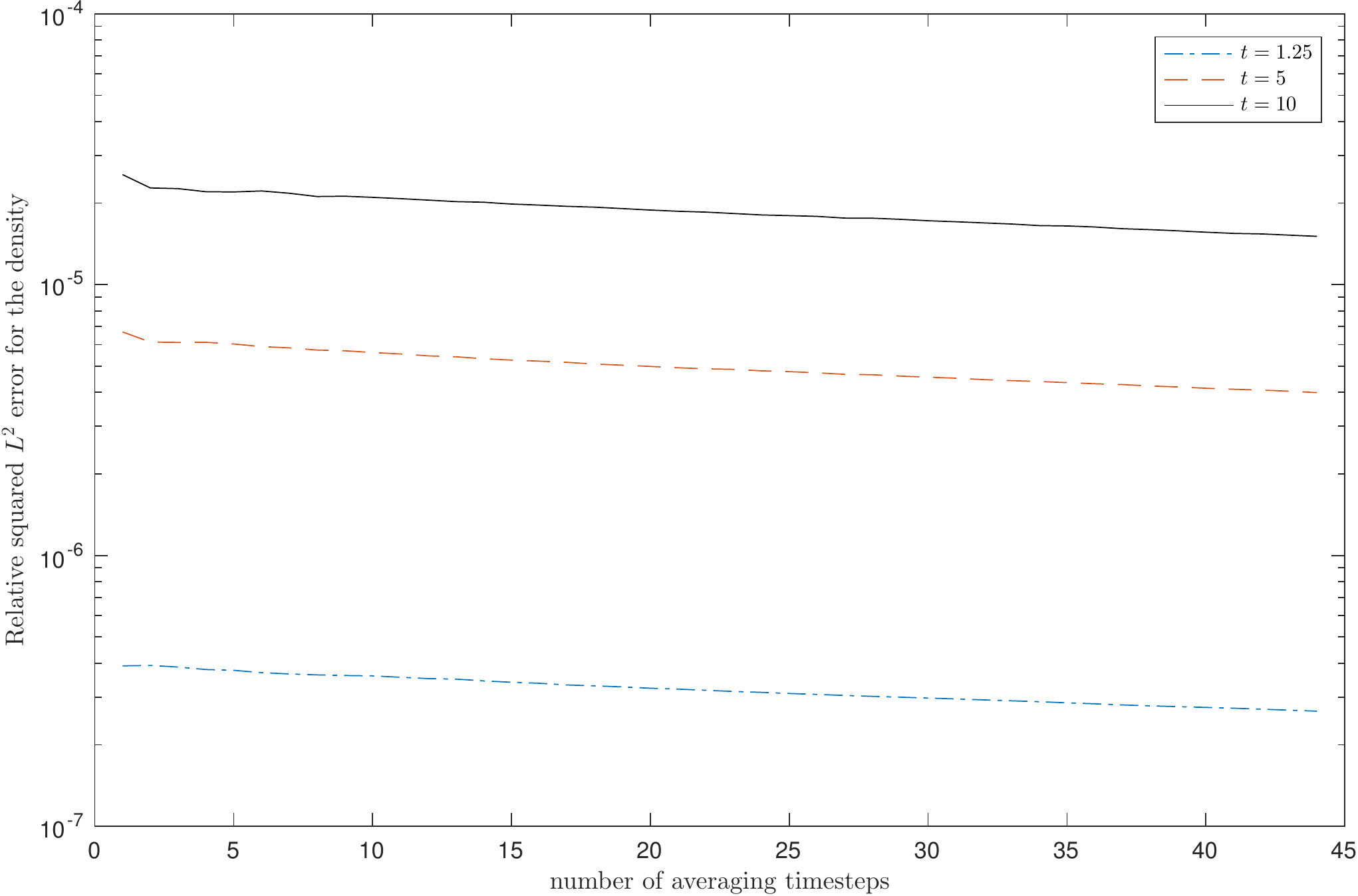}
\subcaption{Error in the density, $\varepsilon$.} \label{fig:optimtimestep2:a}
\end{minipage}
\begin{minipage}{.49\textwidth}
\includegraphics[width=\textwidth]{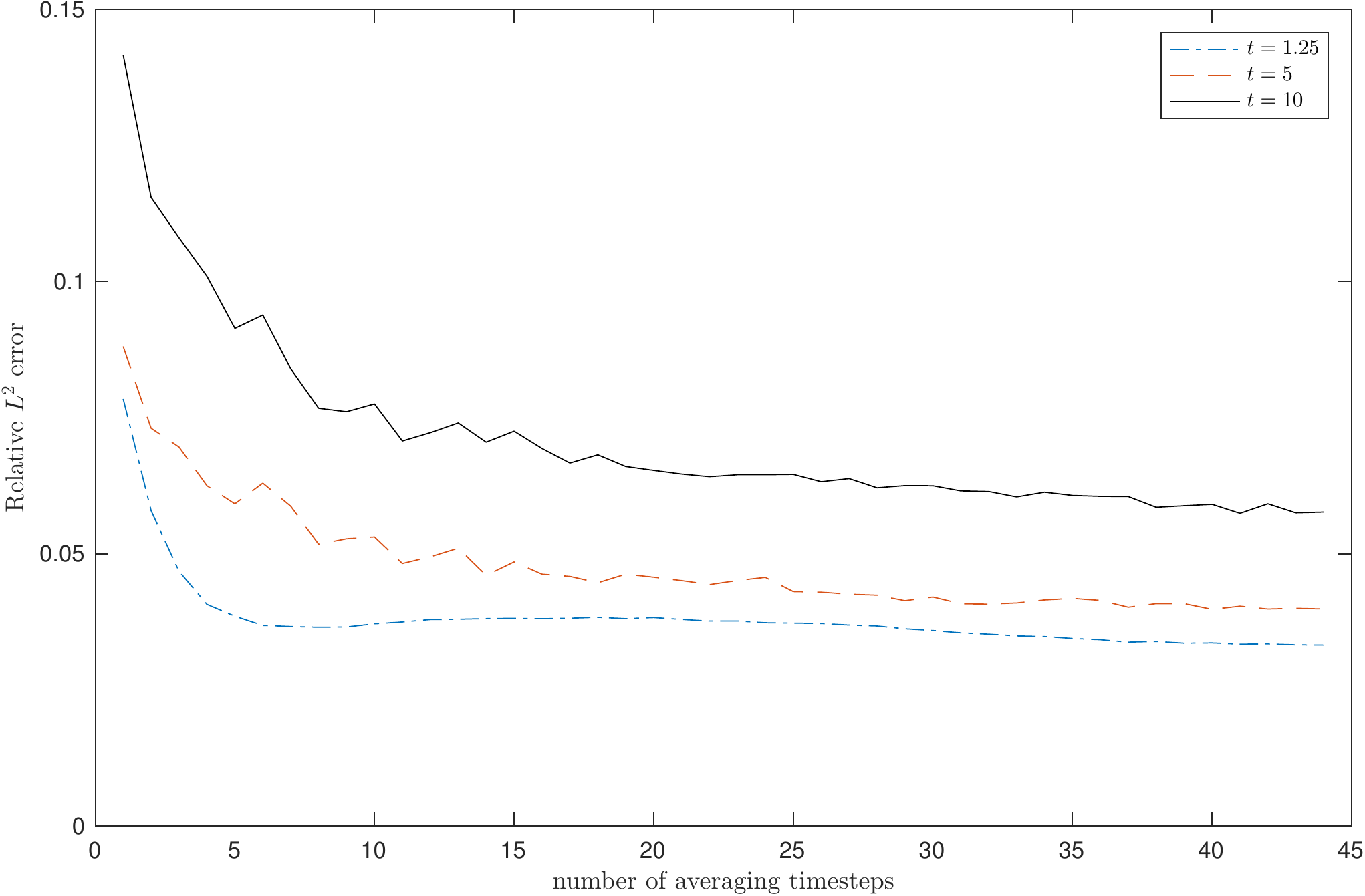}
\subcaption{Error in the discretized distribution function.}\label{fig:optimtimestep2:b}
\end{minipage}
\caption{Extrapolation error as a function of the number of fine-scale timesteps per projective step. Parameters are otherwise chosen as for run $7$ in Table~\ref{tab:expparameters}.}\label{fig:optimtimestep2}
\end{figure}

The initial electron temperature is 1000 times higher than the initial ion temperature: the thermal velocities are set as $v_\mathrm{th}^{i} = 0.0316$, $v_\mathrm{th}^\mathrm{e} = 42.84857$. 
The set of numerical parameters normalized for the PIC code are, in code units, $\epsilon_0 = 1$, $\omega_p^\mathrm{e} = 42.8486$, $q_\mathrm{e}/m_\mathrm{e} = -1.0$, $\omega_p^\mathrm{i} = 1$, $q_\mathrm{i}/m_\mathrm{i} = 0.000554466$. The system length is $L = 1000$, and the initial value for the characteristic radius is $r_0 = 40$. Simulation parameters are summarized in Table~\ref{tab:expparameters}. As in the previous test case, the electron motion is much faster than the ion motion and we can exploit this separation of scales with EFPI, since the expansion occurs roughly at the speed of the ion acoustic wave, $r(t) \approx \sqrt{r_0^2 + 2 c_\mathrm{s0}^2 t^2}$~\cite{Mora2005}.

The Debye length is initially $\lambda = v_\mathrm{th}^\mathrm{e}/\omega_p^\mathrm{e} = 1$ at the center of the plasma slab. The PIC time step is chosen as $ \delta t = 0.005 \approx 0.21 / \omega_p^\mathrm{e}$ and the grid step $h = L / 4096 \approx 0.244 \lambda$, allowing to resolve the fast space and time scales.

\subsubsection{Differences in the setup}
The boundary conditions in this example are different from the previous example: we simulate only a half-domain with a purely absorbing right boundary condition and an axis of symmetry on the left of the domain, as for example in~\cite{Degond_2010}.
In addition, the global kinetic energy of the electrons is now projectively integrated as an active variable to account for the progressive decrease in temperature as the expansion expands and freezes, as well as the total charge of each species to account for the possible escape of particles through the right boundary.

\subsubsection{Choice of EFPI numerical parameters and numerical results}

As in the ion acoustic wave test case, numerical experiments show that the coarse time step $\Delta t$ is controlled by a Courant-type condition. To choose the number $N_e$ of fine-scale time steps computed for each projective integration cycle, we repeat the analysis from the previous test case by measuring empirically the relative $L^2$ error in the ion density and discretized distribution functions depending on $N_e$, using the parameters from run~$10$. 
Results are shown in Fig.~\ref{fig:optimtimestep2}. The method is stable for all values of $N_e$, and the choice of $N_e = 10$ seems to give good results, and is used in all simulations presented in this section.

\begin{figure}[p]
\centering
\begin{minipage}{.49\textwidth}
	\centering
	\includegraphics[width=.8\textwidth]{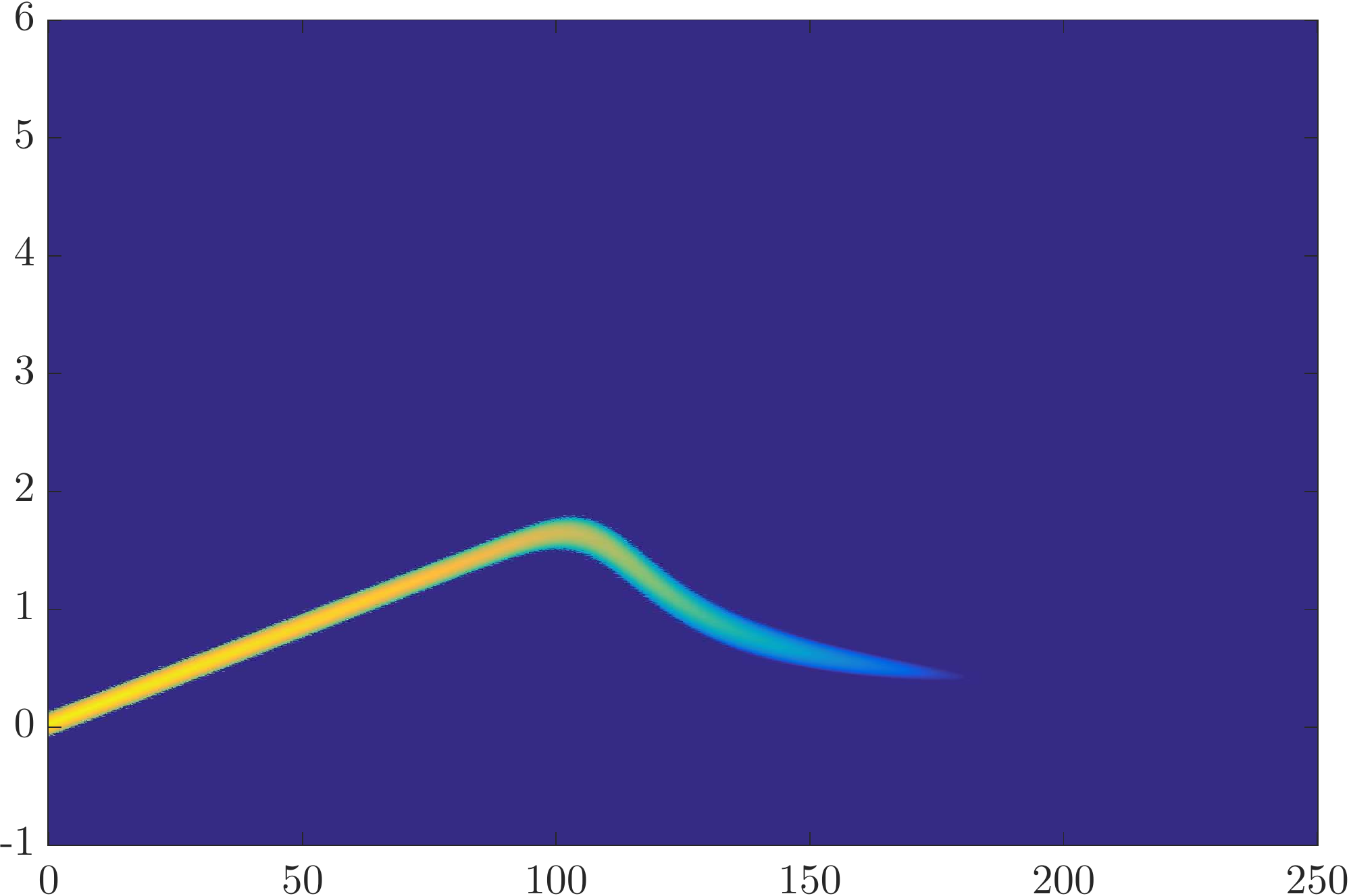}
	\subcaption{Fully resolved PIC solution (Run $9$), logarithmic color scale}\label{fig:expansion1:a}
\end{minipage}
\begin{minipage}{.49\textwidth}
	\centering
	\includegraphics[width=.8\textwidth]{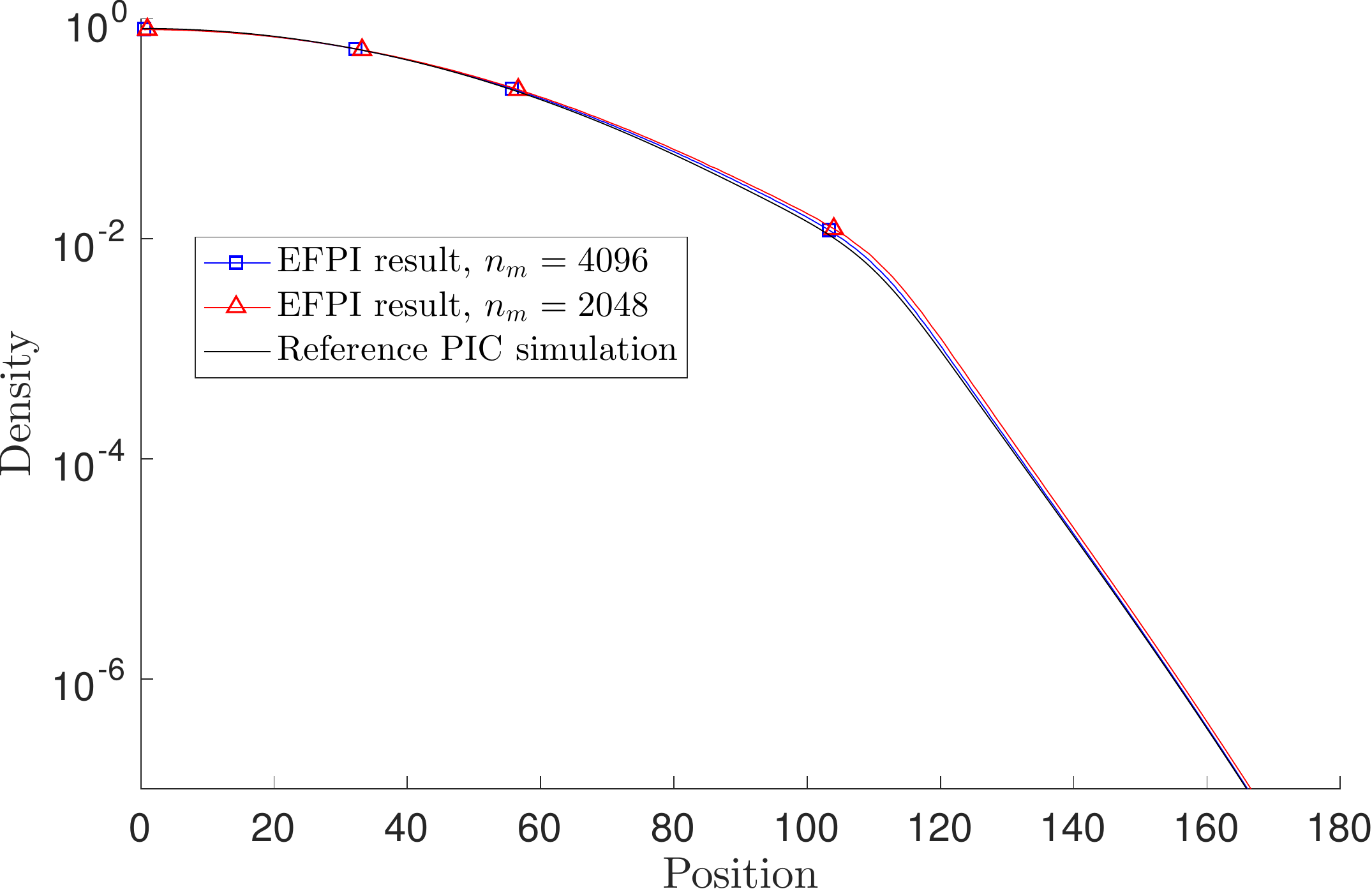}
	\subcaption{Ion density (logarithmic scale) versus $x$}\label{fig:expansion1:b}
\end{minipage}
\\
\begin{minipage}{.49\textwidth}
	\centering
	\includegraphics[width=.8\textwidth]{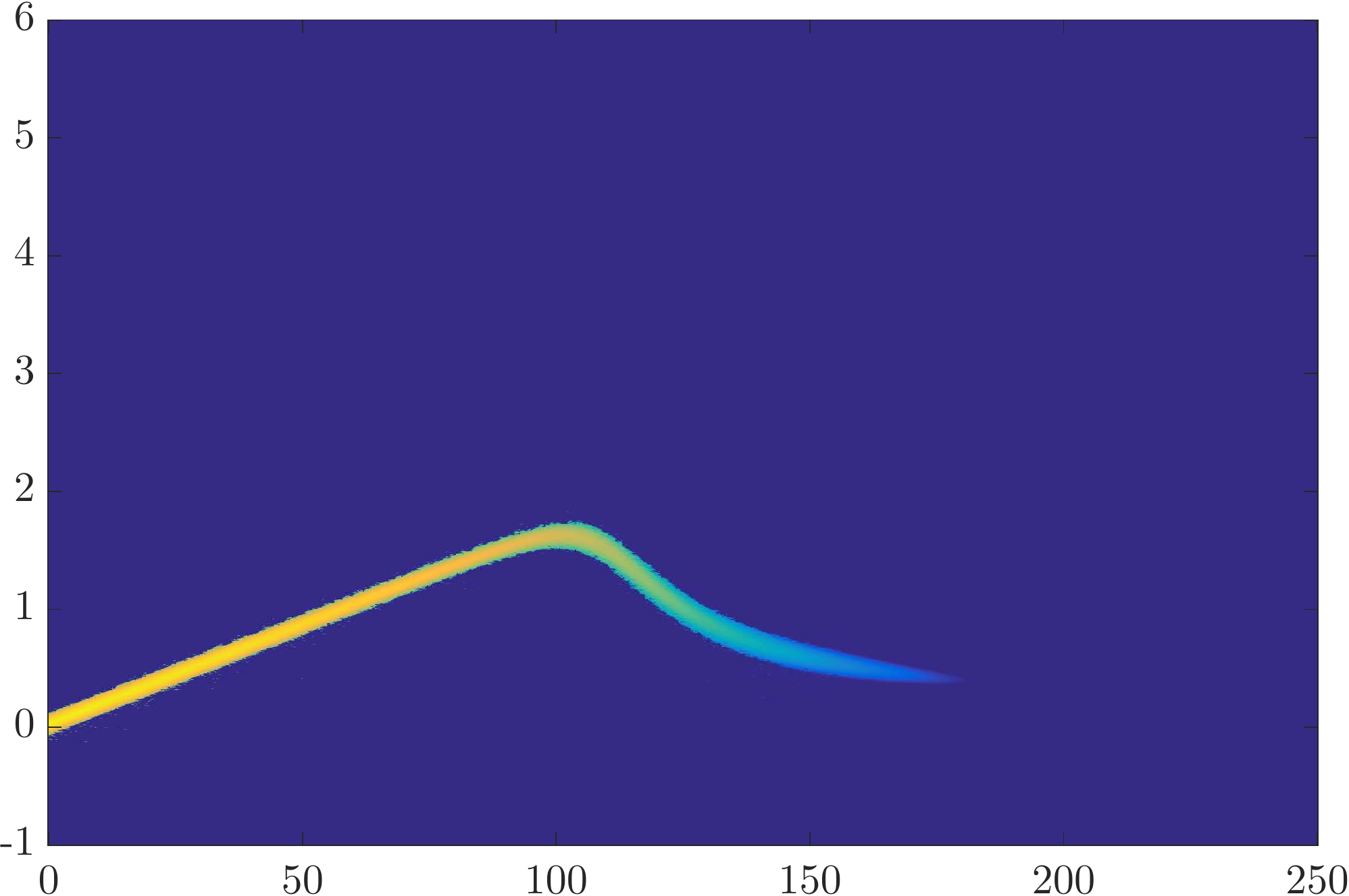}
	\subcaption{EFPI solution (Run $10$, $n_c = 4096$), logarithmic color scale}\label{fig:expansion1:c}
\end{minipage}
\begin{minipage}{.49\textwidth}
	\centering
	\includegraphics[width=.8\textwidth]{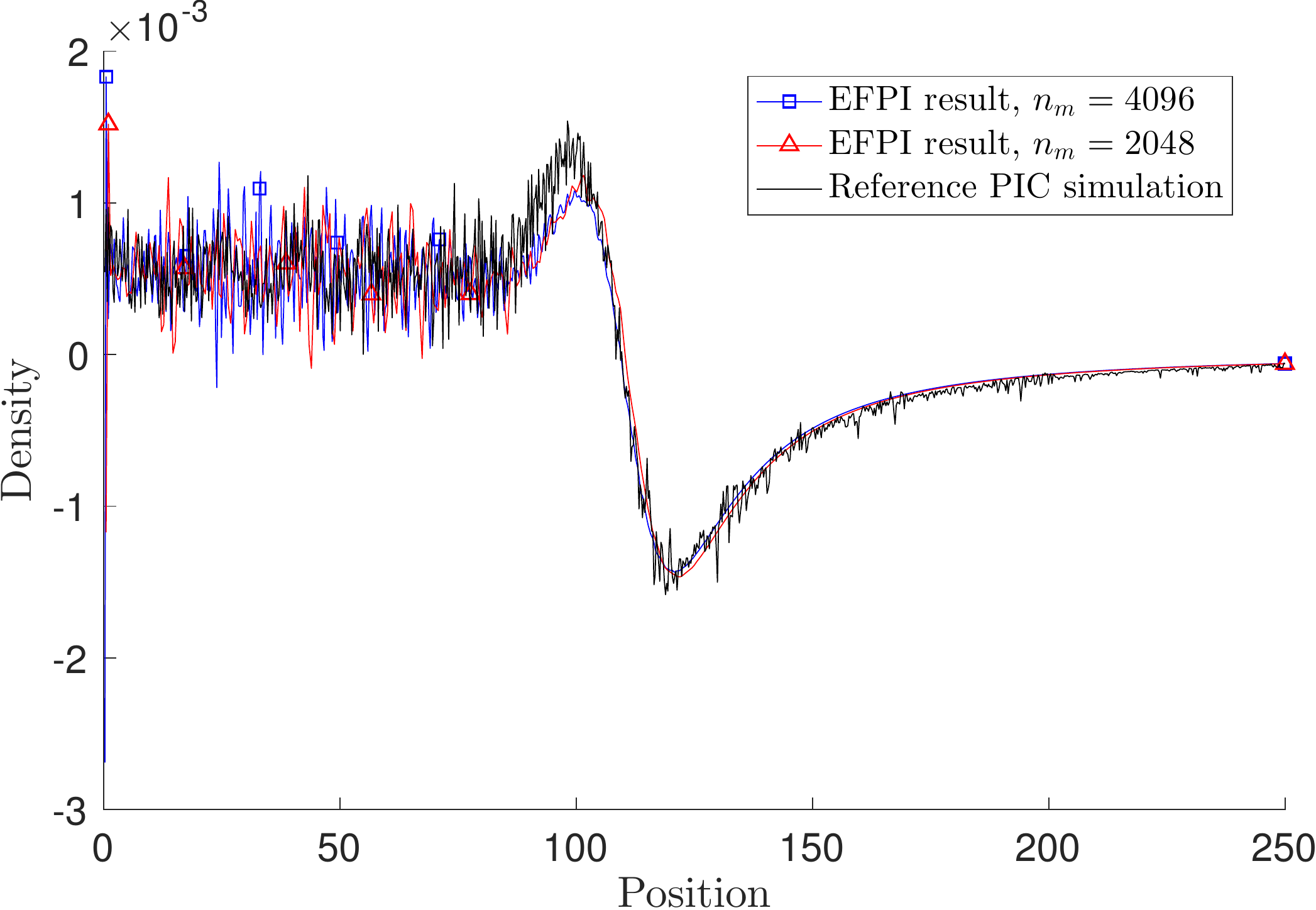}
	\subcaption{Local charge density versus $x$}\label{fig:expansion1:d}
\end{minipage}
\caption{Plasma expansion at $t = 20$: PIC (Run $9$) and EFPI (Runs $10$ and $11$, $n_c = 4096$ and $2048$).}\label{fig:expansion1}

\vspace{10pt}
\centering
\begin{minipage}{.49\textwidth}
	\centering
	\includegraphics[width=.8\textwidth]{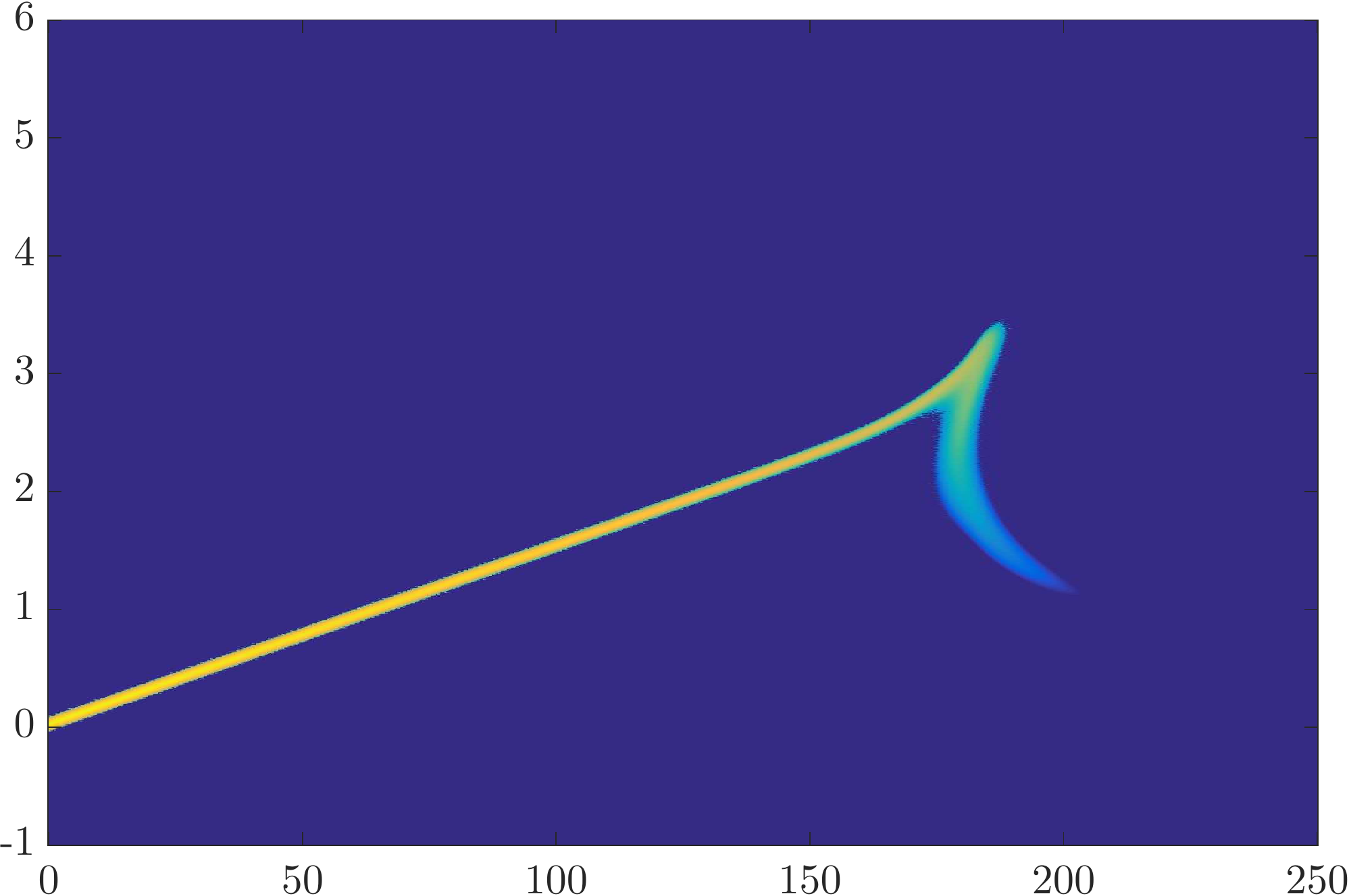}
	\subcaption{Fully resolved PIC solution (Run $9$), logarithmic color scale}\label{fig:expansion2:a}
\end{minipage}
\begin{minipage}{.49\textwidth}
	\centering
	\includegraphics[width=.8\textwidth]{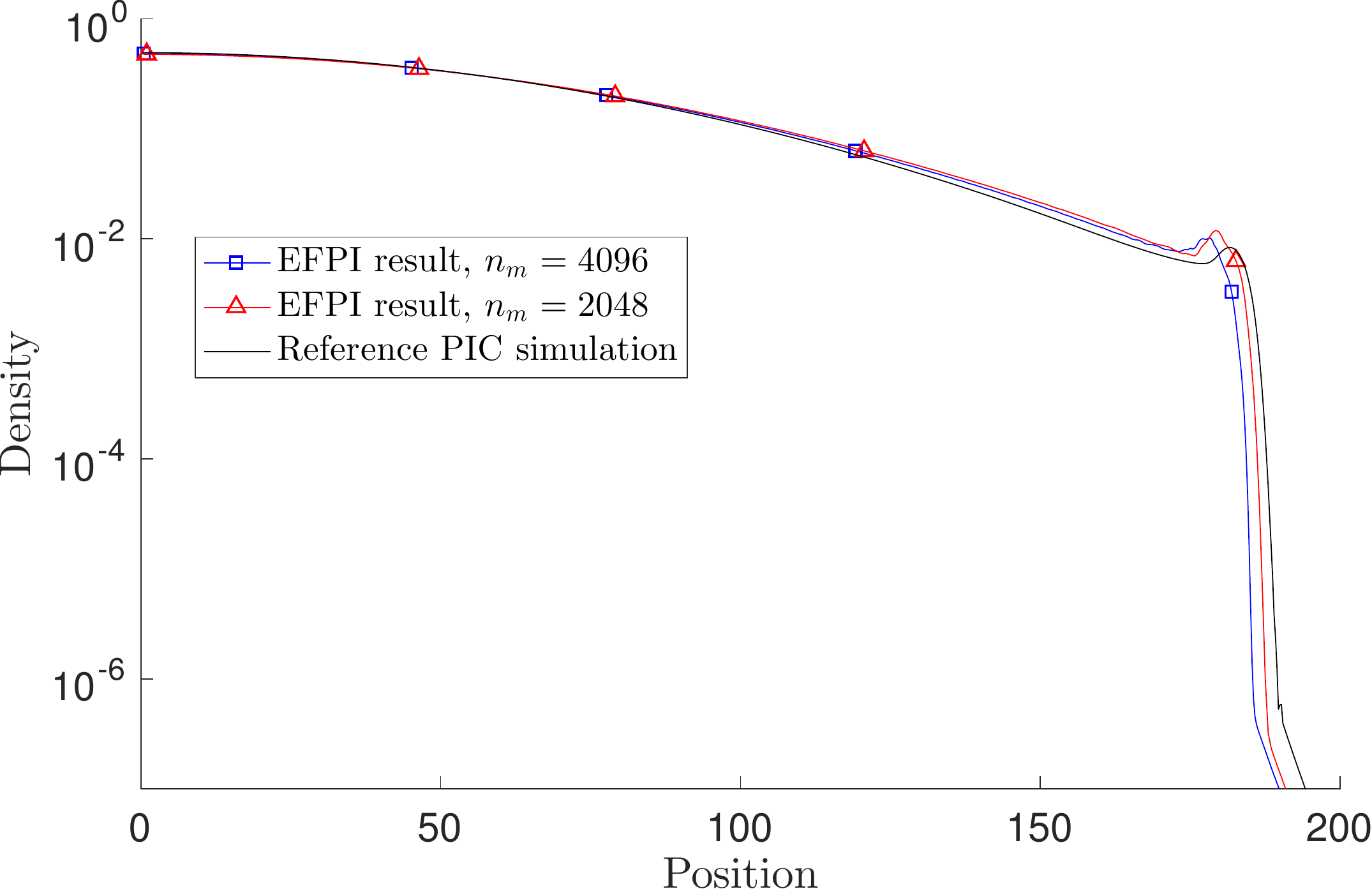}
	\subcaption{Ion density (logarithmic scale)}\label{fig:expansion2:b}
\end{minipage}
\\
\begin{minipage}{.49\textwidth}
	\centering
	\includegraphics[width=.8\textwidth]{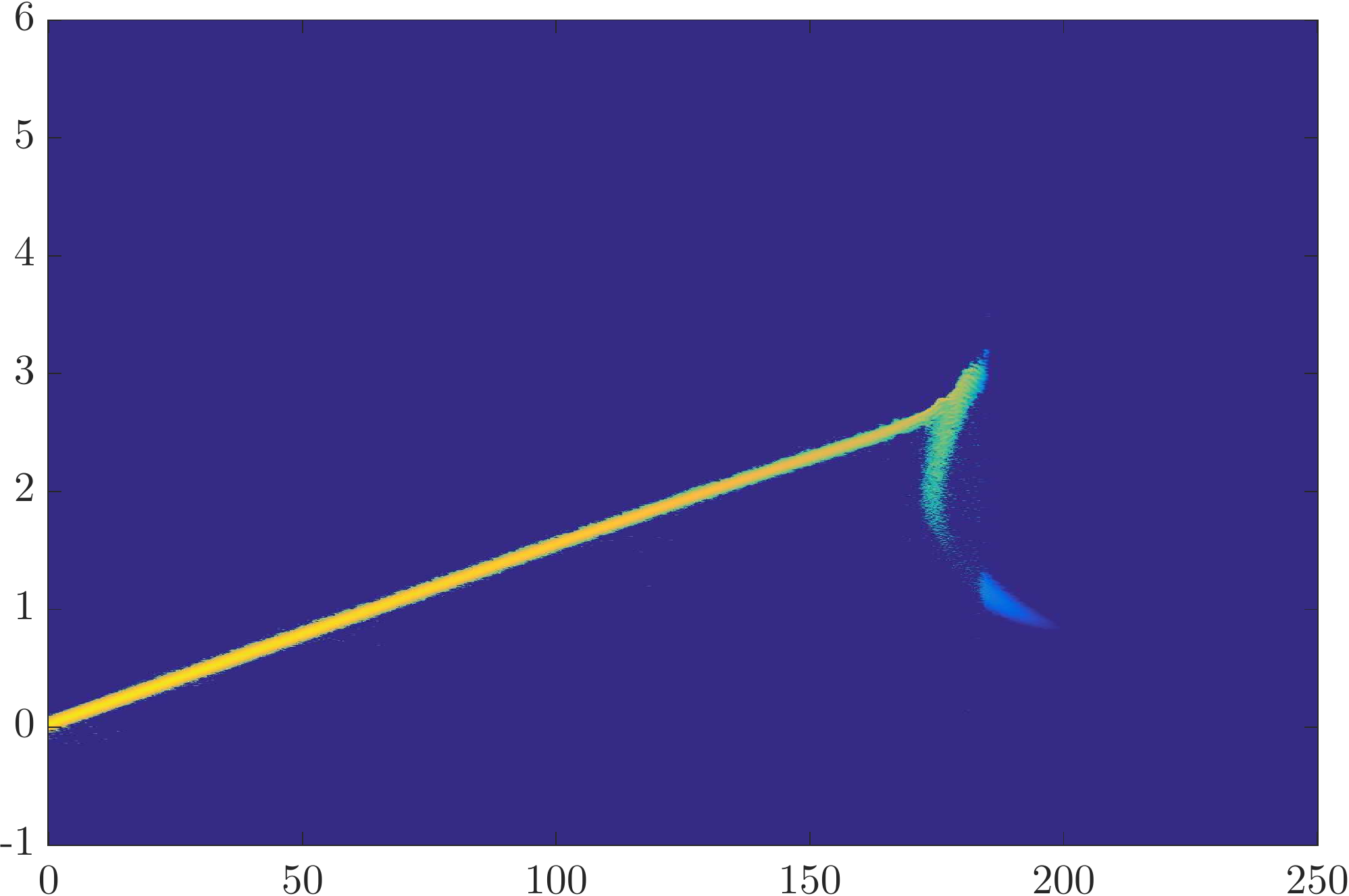}
	\subcaption{EFPI solution (Run $10$, $n_c = 4096$), logarithmic color scale}\label{fig:expansion2:c}
\end{minipage}
\begin{minipage}{.49\textwidth}
	\centering
	\includegraphics[width=.8\textwidth]{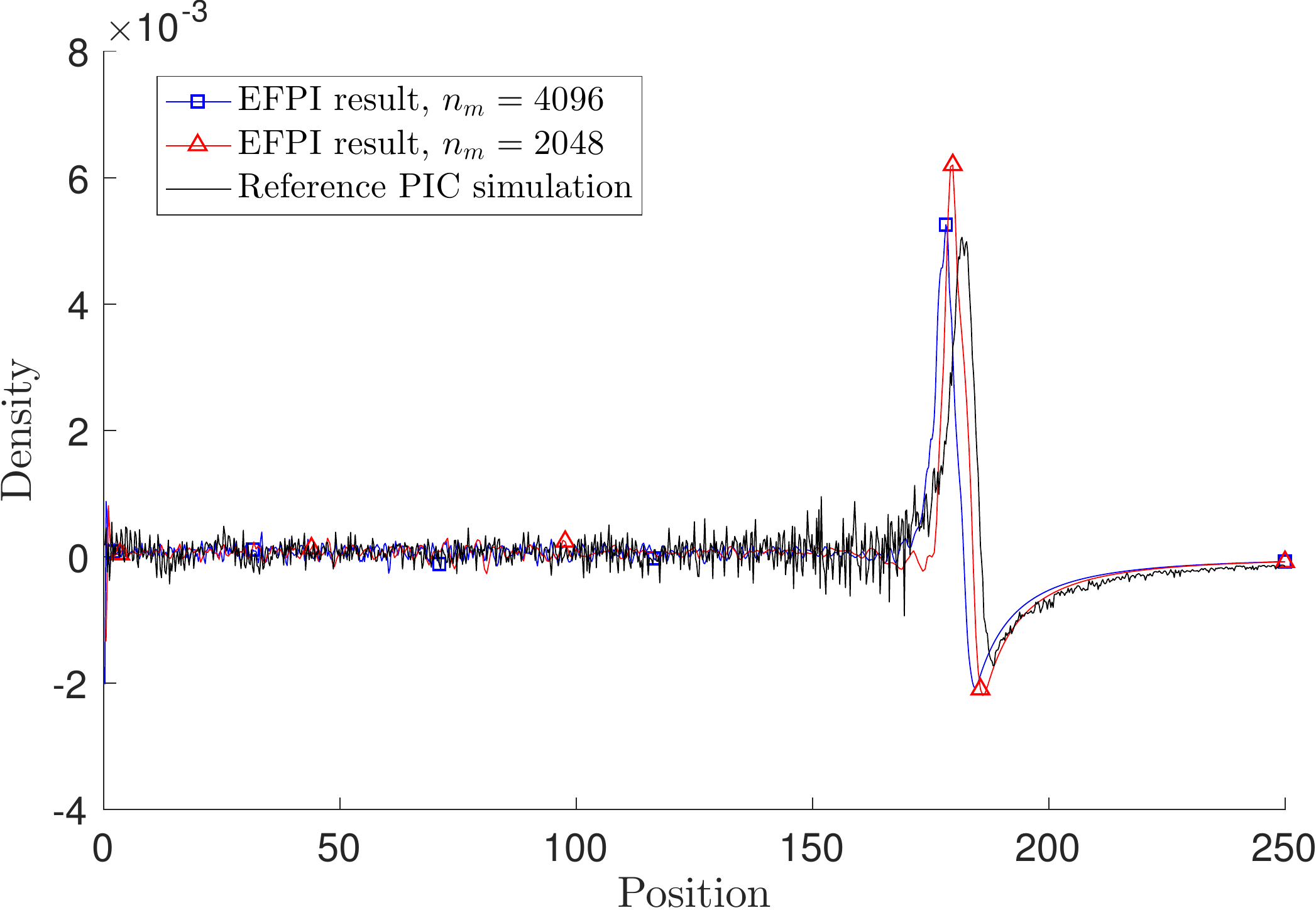}
	\subcaption{Local charge density}\label{fig:expansion2:d}
\end{minipage}
\caption{Plasma expansion at $t = 50$: PIC (Run $9$) and EFPI (Runs $10$ and $11$, $n_c = 4096$ and $2048$).}\label{fig:expansion2}
\end{figure}

Figures~\ref{fig:expansion1} and~\ref{fig:expansion2} show the results for runs~$6$ and~$7$, respectively, at times $t = 20$ and $t = 50$. We plot the unnormalized ion distribution function, the ion density and the local charge density due to both ion and electrons. As time advances, the ion slab expands roughly at the speed of the ion acoustic wave~\cite{Mora_2005}.
 
First, we observe that the proposed method gives a correct account of the speed of expansion of the ion acoustic wave, thus verifying that it correctly captures this wave as also seen earlier. 
As it expands, the plasma develops a double layer at the edges of the expansion which is accurately reproduced by the EFPI solution. At later times (see Fig.~\ref{fig:expansion2}, $t = 50$), the solution exhibits wave breaking and the ion distribution function becomes clearly non-maxwellian, an effect which is also reproduced by the EFPI method.
We note also that data has the same or lower level of noise compared to the reference PIC simulation, even as it uses $4$ times less particles (see Table~\ref{tab:expparameters}).

\section{Conclusions}
In this paper we have presented a novel equation-free projective integration method for the Vlasov-Poisson equation. At the coarse level, the ion distribution function is represented on a non-parametric wavelet basis. At the fine-scale level, a particle-in-cell description is implemented for both ions and electrons. A lifting operator and a restriction operator are used to go back and forth consistently between these descriptions, assuming that the electrons are adiabatic. During each macro time-step, an explicit PIC code is stepped forward for a small number of fine-scale time steps to determine the rate of change in the macroscopic variables, which are then projected forward. The effects of statistical shot noise are controlled using both wavelet-based techniques, such as linear thresholding of coefficients, and more usual techniques in the EFPI framework such as coarsening in space and integration with the flow. 

To validate the method and investigate its potential for simulating plasma systems, we applied it to two test cases. 
The first, a well known problem of the nonlinear propagation and steepening of an ion acoustic wave, was proposed in~\cite{Shay_2007} to test the EFREE method as a first attempt at implementing the projective integration framework to model kinetic plasma phenomena. 
We also proposed a second test case, new for projective integration methods: the expansion of an ion slab in a vacuum~\cite{Degond_2010}. These tests have confirmed that the method is stable and allows the use of time and space steps which are much larger than for the standard PIC method, while lifting numerous restrictions of the original EFREE method, most notably the assumption of a Maxwellian ion distribution function.
We observed computational speedups between one to two orders of magnitude in our test runs. While these numbers depend highly on implementation and code optimization, such results are certainly encouraging.

To further increase the accuracy and speedup of the method, the next step is to implement a non-linear thresholding algorithm to adaptively select the relevant wavelet coefficients for the coarse description of the ion distribution function. This will allow the exploitation of the sparsity of the discretization of smooth ion distribution functions in a wavelet basis and reduce the size of the coarse discretization. This will be essential in higher dimensions. This approach should also be combined with a framework for adapting the coarse resolution in time and space. To realize this, new a posteriori error estimators should be developed to measure deviations between the coarse model and the underlying multiscale plasma system. The numerous algorithms which have been developed for applying wavelets to signal analysis and adaptive denoising offer many interesting possibilities for this new application. These various possible research directions show that the method still has a lot of potential for improvement. 

Ultimately, the goal will be to apply the projective integration paradigm to more challenging plasma problems for which direct simulations by ordinary explicit, implicit or hybrid methods is out of reach due to the need to resolve meso-physics with a complex influence on the macroscopic behavior, such as transport driven by meso-scale turbulence.
\section*{Acknowledgment}
This work was partially supported by the Department of Energy, Grant No. DE-SC0008553.
\section*{References}
\bibliography{biblio}
\section*{Appendix}
\appendix
\renewcommand*{\thesection}{\Alph{section}}

\section{A primer on wavelets}\label{sec:PrimerWavelets}
For completeness, let us recall some notions on wavelets, see e.g.~\cite{Mallat_1999} for further details. 

\paragraph{Continuous decomposition} For any function $f\in L^2(\R)$ and a starting resolution level $J_\mathrm{f}$, representation in the wavelet basis is given by the reconstruction formula:
\begin{equation}\label{eq:reconstructionformula}
f(y) = \sum_{k = -\infty}^\infty \overline{f}_{J_\mathrm{f},k} \phi_{J_\mathrm{f},k}(y) + \sum_{j = J_\mathrm{f}}^\infty \sum_{k = -\infty}^\infty \widetilde{f}_{j,k} \psi_{j,k}(y)
\end{equation} 
where
\begin{equation}
\begin{aligned}
\phi_{j,k}(y) &= 2^{j/2} \phi \left (2^j y - k \right ),\\
\psi_{j,k}(y) &= 2^{j/2} \psi \left (2^j y - k \right ),
\end{aligned}
\end{equation}
are scaled and translated versions of the father wavelet (or scaling function) $\phi(y)$ and the mother wavelet $\psi(y)$. 
We will assume that $\psi$ has zero average and compact support, and that the wavelet family $\left (\psi_{j,k}(y) \right )_{j \in \N, k \in \Z}$ is an orthonormal family. One can show existence of the scaling function $\phi$, defined such that 
the orthogonal complement in $L^2(\R)$ of the linear space spanned by the wavelets is itself orthogonally spanned by the translates of the function $\phi$.

The coefficients $\widetilde{f}_{j,k} = \langle f \lvert\psi_{j,k} \rangle = \int_{\R} f \psi_{j,k}$ of a function $f$ in this wavelet family, called detail coefficients, describe fluctuations of the function $f$ at scale $2^{-j}$ around the position $k / 2^j$. Large values of $j$ correspond to fine scales, small values to coarse scales. The coefficients $\overline{f}_{j,k} = \langle f \lvert\phi_{j,k} \rangle$ are called the scaling coefficients.
The first sum on the right-hand side of formula~\eqref{eq:reconstructionformula} forms a smooth approximation of $f$ at the coarse scale $2^{-J_\mathrm{f}}$, and the second sum represents a correction formed by the addition of details at successively finer scales.

In this work, we employ the R-Coiflet wavelet family of order 4, chosen for its regularity and symmetry properties~\cite{Daubechies1993}:
\[
\int_\R y^m \psi(y) = 0, \qquad 0 \leq m < 4, \qquad \int_\R y^m \phi(y) = 0, \qquad 1 \leq m < 4.
\]
This implies that the first four moments of the distribution function depend only on its scaling coefficients, and not on the detail coefficients.

\paragraph{Periodization}
Let us note that the reconstruction formula~\eqref{eq:reconstructionformula} involves an infinite sum over the position index $k$. In practice we are interested only in functions defined over a bounded domain, and the commonly used alternative is to periodize the wavelet transform over this domain~\cite{Mallat_1999}. Assuming that the coordinates are rescaled so that all particles lie in $[0,1]$, wavelets and scaling functions are replaced by their periodic counterparts:
\begin{equation}
	\begin{aligned}
		\phi_{j,k}(y) & \longrightarrow \sum_{i \in \Z} \phi_{j,k}(y + l), \\
		\psi_{j,k}(y) & \longrightarrow \sum_{i \in \Z} \psi_{j,k}(y + l).
	\end{aligned}
\end{equation}
Throughout this paper we use such periodic wavelets.

\paragraph{Discrete wavelet decomposition}
In practice, a distribution function $f$ is typically represented numerically by its histogram $f^H$ on a regular grid with $2^{J_\mathrm{f}}$ points $x_k = 2^{-J_\mathrm{f}} k$, $0 \leq k \leq 2^{J_\mathrm{f}}-1$. Wavelet-based density estimation algorithms~\cite{Donoho_1996,Nguyen_2010} proceed by approximating the scaling coefficients at the finest scale $J_\mathrm{f}$ by:
\begin{equation}\label{def:ScalingCoeffHistogram}
	\overline{f}_{J_\mathrm{f},k} \approx 2^{-J_\mathrm{f}/2} f^H(2^{-J_\mathrm{f}} k).	
\end{equation}
The fast Discrete Wavelet Transform~\cite{Mallat_1999} can then be used to deduce the wavelet coefficients at coarser scales ($j \leq J_\mathrm{f}$). Filtering algorithms can then choose coefficients to be discarded according to some linear or nonlinear criteria~\cite{Donoho_1996}.

\section{Particle sampling and co-evolving frame integration}
\label{sec:MovingFrame}
We describe here the co-evolving frame projection which is used in the restriction step of our proposed method, see Section~\ref{sec:wefree}. Let $\Delta t$ the macroscopic time step, $t_n = n \Delta t$. Suppose that at time $s \in [t_n, t_{n+1})$, we know the ion distribution function $f^\mathrm{i}(x,v,s)$. The exact solution $f^\mathrm{i}$ solution of the system~(\ref{eq:vlasov}--\ref{eq:poisson1}) is constant along the characteristics: for a given point $(X_0,V_0)$ in phase space, let $\left ( X(t),V(t) \right )$ be given by Newton's law,
\begin{equation}\label{eq:characteristics}
	\begin{aligned}
		\frac{\mathrm{d}X}{\mathrm{d}t} &= V, &\frac{\mathrm{d}V}{\mathrm{d}t} &= - \frac{e}{m_\textrm{i}} \nabla_x \phi (X,t), \\
		X(s) &= X_0, &V(s) &= V_0.
	\end{aligned}
\end{equation}
We have that $f^\mathrm{i}(X(t), V(t), t) = f^\mathrm{i}(X_0, V_0, s)$. 
Now the self-consistent potential $\phi$ is typically slow-moving and quite smooth.
Assuming that this is the case and $\phi(x,t_n) \equiv \phi_n(x)$ is a good approximation for $\phi(x,s)$ for $s \in [t_n , t_{n+1}]$, a simple and explicit approximation to the characteristics is obtained by
\begin{equation}\label{eq:approxcharacteristics}
	\widetilde{X}_{(n,n+1)}(s,t) = X_0 + \left (t - s \right ) V_0, \qquad \widetilde{V}_{(n,n+1)}(s,t) = V_0 -  \frac{e}{m_\textrm{i}} \left (t - s \right )  \nabla_x \widetilde{\phi}_n (X_0) .
\end{equation}
Let us introduce finally the approximately integrated solution $\widetilde{f}_{(n,n+1)}^\mathrm{i}(x,v,s)$ defined implicitely by
\begin{equation}
	\widetilde{f}_{(n,n+1)}^\mathrm{i} \left (\widetilde{X}_{(n,n+1)}(s,t_{n+1}), \widetilde{V}_{(n,n+1)}(s,t_{n+1}), s \right ) = f^\mathrm{i}(X_0, V_0, s).
\end{equation}
Note that if $t \mapsto (\widetilde{X}_{(n,n+1)}(s,t),\widetilde{V}_{(n,n+1)}(s,t))$ were the true characteristics then $\widetilde{f}_{(n,n+1)}^\mathrm{i}(s)$ would be constant to $f^\mathrm{i}(t_{n+1})$. In any case, it holds $\widetilde{f}_{(n,n+1)}^\mathrm{i}(s = t_{n+1}) = f^\mathrm{i}(t_{n+1})$.
Thus we will reduce the projection error by computing and projecting forward a coarse discretization $\widetilde{f}_{(n,n+1)}^\mathrm{i,c}$ of $\widetilde{f}_{(n,n+1)}^\mathrm{i}$, instead of projecting directly the coefficients $f^\mathrm{i,c}$.
To achieve this, we propose the following restriction scheme.

\begin{enumerate}
	\item At initialization of the fine-scale PIC time stepper, e.g. at time $t_n = n \Delta t$, we store the electric field $E_n = -\nabla_x \phi_n$.
	\item Let $\delta t$ be the fine-scale time step. For each fine scale step, $p = 0,\dots,N_e$, the coarse description of the ion distribution function is computed as follows.
	Iterating through all particles representing ions, the phase space position of each is integrated on the approximate characteristics~\eqref{eq:approxcharacteristics}, with $s = t_n + p \delta t$:
	\begin{equation}\label{def:explicitflow}
\left \{\begin{aligned}
\widetilde{ X^\mathrm{i}_j} = X^\mathrm{i}_j +\left (\Delta t - p \delta t \right ) \cdot V^\mathrm{i}_j,\\
\widetilde{ V^\mathrm{i}_j} = V^\mathrm{i}_j + \frac{e}{m_\mathrm{i}}\left (\Delta t - p \delta t \right ) \cdot E_n \left ( X^\mathrm{i}_j \right ).
\end{aligned}\right.
\end{equation}
The particle is deposited on the grid by updating the values of the histogram on the fine grid:
\begin{equation} \left \{
\begin{aligned}
k_x & = \left \lfloor \widetilde{ X^\mathrm{i}_j} / h \right \rfloor, \qquad \qquad \theta_x = \widetilde{ X^\mathrm{i}_j} / h - k_x, \qquad
k_v &= \left \lfloor (\widetilde{V^\mathrm{i}_j}-V_{min})/\delta v \right \rfloor,\\
 [ \widetilde{f}_{(n,n+1)}^\mathrm{i} ]_{k_x, k_v} &\leftarrow [ \widetilde{f}_{(n,n+1)}^\mathrm{i} ]_{k_x, k_v} + (1 - \theta_x) \cdot \omega^\mathrm{i}_j, \\
 [ \widetilde{f}_{(n,n+1)}^\mathrm{i} ]_{k_x+1, k_v} &\leftarrow  [ \widetilde{f}_{(n,n+1)}^\mathrm{i}  ]_{k_x+1, k_v} + \theta_x \cdot \omega^\mathrm{i}_j.
\end{aligned} \right.
\end{equation}
\item Finally, the linear smoothing scheme~\eqref{def:restriction} and the discrete wavelet transform are used to coarsen the phase space resolution of the data as described above.
\end{enumerate}

\section{Quiet start particle loading and momentum-conserving histogram reconstruction}
\label{sec:QuietStart}
At the particle loading stage of the projection method presented in Section~\ref{sec:wefree}, the ion distribution function $f^\mathrm{i}$ is known by its scaling coefficients on the fine grid. Using the approximation~\eqref{def:ScalingCoeffHistogram}, we recover the approximated histogram of the ion distribution function:
\[
	f^\mathrm{i,f} \equiv \left \{ f^\mathrm{i,f}_{k_x, k_v} \right \}_{\stackrel{1 \leq k_x \leq n_{x,\mathrm{f}}}{ 1 \leq k_v \leq n_{v,\mathrm{f}}}}.
\]
Due to the extrapolation process or the properties of the wavelets, some values of the approximate histogram may be negative. 
This results in a non-monotone CCDF, which is nonphysical and must be corrected. 
We propose here a simple reconstruction method which conserves momentum (except in pathological cases) and is found to give good results in practice.
\begin{remark}
	Note that the reconstruction algorithm described hereafter can also be applied after lifting the histogram to the fine grid in velocity, but before lifting it to the fine grid in space. Since it is applied to the velocity histogram for each grid point in space, this can save some computational time if this process becomes a major factor in the total computational time (this was not the case in our numerical experiments).
\end{remark}
Let us compute for a given index of the spatial grid $1 \leq k_x \leq n_{x,\mathrm{f}}$ the corresponding value of the ion density,
\begin{equation}
	n^\mathrm{i,f}_{k_x} = \sum_{k_v = 1} ^{n_{v, \mathrm{f}}} f^\mathrm{i,f}_{k_x,k_v},
\end{equation}
and the values of the normalized conditional cumulative distribution function $F_v$ at points $0 \leq k_v \leq n_{v,\mathrm{f}}$: 
\begin{equation}
F_v[ k_v \ \lvert\  k_x] = \frac{1}{n^\mathrm{i}_{k_x}} \sum_{ l = 1 }^{k_v} f^\mathrm{i,f}_{k_x,k_v}.
\end{equation}
For simplicity, we drop the dependence of $n^\mathrm{i,f}$ and $F_v$ on $k_x$ in the rest of the paragraph.
Note that $F_v[0] = 0$ and $F_v[n_{v,\mathrm{f}}] = 1$. In a first step, we modify the entries of $F_v$ to ensure that $0 \leq F_v \leq 1$ for all indexes; in a second step, we ensure that the resulting $F_v$ is monotone. Let us describe these two stages of the algorithm in details.

\paragraph{Edge reconstruction}
It is possible that some entries of the array $F_v$ fall outside of the range $[0,1]$ because of overshoot. Let us describe here a procedure to correct the case where some values are negative, which can be easily tranposed to the case of values larger than $1$. 
\begin{Steps}
	\item Find $k_0$, the largest index such that $F_v[k_0] < 0$. Set $k^- = k_0 - 1$, $k^+$ = $k_0 + 1$.
	\item Compute the momentum-conserving slope (see Lemma~\ref{lem:MomentumConservation}),
	\begin{equation}\label{eq:Slope1}
		\mathrm{d}F = \frac{ 2 \sum_{k=1}^{k^+-1} F_v[k] }{ (k^+ - k^-)(k^+ - k^- - 1 ) }.
	\end{equation}
	\begin{itemize}
		\item If $0 \leq (k^+ - k^- - 1 )\mathrm{d}F \leq F_v[k^+]$, then go to the next step;
		\item Otherwise, extend the range of indexes between $k^-$ and $k^+$ and recompute $\mathrm{d}F$ accordingly: \\
			If $k^- > 0$, then $k^- \leftarrow k^- - 1$. Otherwise let $k^+ \leftarrow k^+ + 1$ and $k^- \leftarrow k_0 - 1$.
	\end{itemize}
	\item Finally, modify the array $F_v$ by
	\begin{equation}\label{eq:Reconstruction1}
		F_v[k] \leftarrow 0 \text{ for } 0 \leq k < k^-, \qquad \qquad F_v[k] \leftarrow (k - k^-)\mathrm{d}F \text{ for } k^- \leq k < k^+.
	\end{equation}
\end{Steps}

\paragraph{Interior reconstruction}
Let us suppose now that $0 \leq F_v \leq 1$. Suppose that $k_v \mapsto F_v [k_v]$ is not monotone. 
\begin{Steps}
	\item Find $k_0$, the smallest index such that $F_v[k_0] > F_v[k_0 + 1]$. 
		Let $k_1$ be the smallest index such that $F_v[k_0] \leq F_v[k_1]$ and 
		\begin{equation}
				k^+  = \underset{k_0+1 \leq k \leq k_1 }{\mathrm{argmin}}\ F[k] \qquad \text{ and } \qquad
				k_-  = \max \left \{ 0 \leq k \leq k_0 - 1 \ \vert \ F_v[k] \leq F[k^+] \right\}.
		\end{equation}
		\item Compute the momentum-conserving slope:
	\begin{equation}\label{eq:Slope2}
		\mathrm{d}F = \frac{2}{k^+ - k^- - 1} \left ( \frac{\sum_{k=k^-}^{k^+-1} F_v[k]}{k^+ - k^-} - F_v[k_-] \right ).
	\end{equation}
	\begin{itemize}
		\item If $0 \leq (k^+ - k^- - 1 )\mathrm{d}F \leq F_v[k^+]$, then go to the next step;
		\item Otherwise, set $k^+ \leftarrow k^+ + 1$ and recompute $\mathrm{d}F$ accordingly.
	\end{itemize}
	\item Finally, modify the array $F_v$ by
	\begin{equation}\label{eq:Reconstruction2}
		F_v[k] \leftarrow F_v[k_-] + (k - k^-)\mathrm{d}F \qquad \text{ for } k^- < k < k^+.
	\end{equation}
\end{Steps}
Now $F_v$ is monotone up to $k^+ > k_0$. Proceeding recursively, we obtain in all cases a corrected monotone array $F_v^+$ where the $+$ denotes that the correction is computed from the lower to the upper indexes. 
To ensure symmetry of the reconstruction, it is necessary to compute another corrected array $F_v^-$, computed in similar fashion but starting from the higher indexes.
Finally, we define the reconstructed array $\mathcal{R}F_v$ as:
\begin{equation}\label{eq:reconstruction}
\mathcal{R}F_v = ( F_v^+ + F_v^- ) / 2.
\end{equation}
\begin{lemma}\label{lem:MomentumConservation}
The slope defined by Eq.~\eqref{eq:Slope2} (resp. Eq.~\eqref{eq:Slope1}) is designed so that the reconstruction in Eq.~\eqref{eq:Reconstruction2} (resp. Eq.~\eqref{eq:Reconstruction1}) conserves the first two moments of the histogram $f_k = F_v[k] - F_v[k-1]$,
\[
	M_0 = \sum_{k = 1}^{n_{v,\mathrm{f}}} (F_v[k] - F_v[k-1]) \qquad \text{and} \qquad M_1 = \sum_{k = 1}^{n_{v,\mathrm{f}}} k (F_v[k] - F_v[k-1]).
\]
\end{lemma}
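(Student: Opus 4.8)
The plan is to reduce the claim to one elementary identity about partial sums of the cumulative array $F_v$, obtained by summation by parts. Write $n = n_{v,\mathrm{f}}$. Since $F_v[0] = 0$, telescoping gives $M_0 = \sum_{k=1}^{n}(F_v[k]-F_v[k-1]) = F_v[n]$, and an Abel summation gives
\[
	M_1 = \sum_{k=1}^{n} k\,(F_v[k]-F_v[k-1]) = n\,F_v[n] - \sum_{k=1}^{n-1} F_v[k].
\]
Hence conserving the pair $(M_0,M_1)$ is equivalent to leaving $F_v[n]$ unchanged \emph{and} leaving the partial sum $\sum_{k=1}^{n-1}F_v[k]$ unchanged.

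Next I would note that each reconstruction step alters $F_v$ only on a bounded block lying strictly below the endpoint: in~\eqref{eq:Reconstruction2} the indices $k^- < k < k^+$, with $F_v[k^-]$, $F_v[k^+]$ and all entries outside $[k^-,k^+]$ untouched; and in~\eqref{eq:Reconstruction1} the indices $0 \le k < k^+$, with everything at $k^-$ and below reset to $0$ and $F_v[k^+]$ onward untouched. In the non-pathological regime $k^+ \le n$, so $F_v[n]$ is untouched and $M_0$ is conserved automatically; and the only terms of $\sum_{k=1}^{n-1}F_v[k]$ that can change lie inside the modified block. It therefore suffices to show that each reconstruction preserves the sum of $F_v$ over its block.

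This is a direct computation in which the slopes~\eqref{eq:Slope1}--\eqref{eq:Slope2} turn out to be exactly what is required. On the block the new entries form an arithmetic progression: $(k-k^-)\mathrm{d}F$ in the edge case (so that the sub-block $[1,k^--1]$ is zeroed out, but its mass is not lost because~\eqref{eq:Slope1} is built from the full old block sum $\sum_{k=1}^{k^+-1}F_v[k]$), and $F_v[k^-] + (k-k^-)\mathrm{d}F$ in the interior case (reading $k_- = k^-$). Summing these over the block with $\sum_{m=1}^{M} m = M(M+1)/2$, equating the result to the old block sum, and solving the resulting linear relation for $\mathrm{d}F$ returns precisely~\eqref{eq:Slope2}, respectively~\eqref{eq:Slope1}. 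Finally, $M_0$ and $M_1$ are linear functionals of the array, so conservation for $F_v^+$ and $F_v^-$ carries over to the symmetrized output $\mathcal{R}F_v = (F_v^+ + F_v^-)/2$ of~\eqref{eq:reconstruction}.

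The main obstacle is not depth but bookkeeping: aligning the index conventions in~\eqref{eq:Slope1}--\eqref{eq:Reconstruction2} exactly (the half-open ranges, the summation limits, and the implicit identification $k_- = k^-$), and making precise which ``pathological cases'' are excluded, namely those in which the extension loop on $k^-$, $k^+$ fails to terminate within $[0,n]$ (so that the modified block reaches $k = n$ and the $M_0$ argument fails) or in which the range and monotonicity test on $\mathrm{d}F$ in the two subroutines cannot be met.
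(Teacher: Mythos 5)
Your proof is correct and is essentially the paper's own argument: both rest on summation by parts to convert conservation of $M_1$ into preservation of the sum of $F_v$ over the modified block, which is exactly the condition that the slope formulas~\eqref{eq:Slope1}--\eqref{eq:Slope2} are built to enforce (and your remark on which ``pathological cases'' are excluded matches the caveat in the surrounding text). The only difference is organizational --- you perform the Abel summation globally up front, reducing everything to invariance of $F_v[n]$ and of $\sum_{k=1}^{n-1}F_v[k]$, whereas the paper computes the local change $M_1^{\mathrm{before}} - M_1^{\mathrm{after}}$ directly and carries out the summation by parts inside the modified block.
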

\begin{proof}
The moment of order zero is always conserved, since $M_0 = F_v[n_{v,\mathrm{f}}] - F_v[0] = 1$.
For the moment of order one, we investigate only Eq.~\eqref{eq:Reconstruction2} since the proof is the same. The change in $M_1$ can be computed by
\begin{align*}
	M_1^{before} - M_1^{after} &= \sum_{k^- < k  < k^+} (k - k^+) \left ( F_v[k] - F_v[k-1] -\mathrm{d}F \right ) \\
					&= \sum_{k^- < k  < k^+} (k - k^+) F_v[k] - \sum_{k^- \leq k < k^+} (k +1 - k^+) F_v[k] -\left ( \sum_{k' = 1}^{k^+ - k^- - 1} -k' \right )  \mathrm{d}F \\
					&=  \frac{(k^+ - k^-)(k^+ - k^- -1)}{2} \mathrm{d}F  - (k^- - k^+) F_v[k^-] - \sum_{k = k^-}^{k^+ - 1} F_v[k] = 0,
\end{align*}
where we have used the fact that $(k +1 - k^+) F_v[k] = 0$ for $k = k^+ - 1$ and also Eq.~\ref{eq:Slope2}.
\end{proof}
\paragraph{Particle loading} Once all the reconstructed arrays $\left (\mathcal{R} F_v [\ \cdot\ \lvert k_{x} ] \right )_{1 \leq k_x \leq n_{x,\mathrm{f}}}$ have been computed, the quiet loading of the ion particles proceeds as follows.
Recall that each particle is parameterized by its position $X^\mathrm{i}_j$, velocity $V^\mathrm{i}_j$ and weight $\omega^\mathrm{i}_j$.
Let $(a_j, b_j)_{j \in \N} \subset [0,1]^2$ be a low discrepancy sequence (e.g. the Hammersley sequence~\cite{Hammersley1960}), which yields numbers that are more evenly distributed than a randomly generated sequence. Then, we set the initial particle parameters as 
\begin{equation}
\begin{aligned}
X^\mathrm{i}_i &= L \cdot a_i, \\
V^\mathrm{i}_i &= (\mathcal{R} F_v)^{-1}(b_i \ \lvert\ x = X^\mathrm{i}_i) ,\\
\omega^\mathrm{i}_i &= n^\mathrm{i}(X^\mathrm{i}_i), 
\end{aligned}
\end{equation}
where all values are linearly interpolated between the two closest points of the fine grid. The inverse mapping $ (\mathcal{R} F_v)^{-1}(\ \cdot \ \lvert x = X^\mathrm{i}_i)$ is evaluated by lookup and linear interpolation.
To finish initializing the PIC code, the electrons are loaded using the normal quiet start procedure for a shifted Maxwellian function.

\end{document}